\documentclass[twocolumn]{autart}    % Enable this line and disable the 
\usepackage{amsmath,amsfonts}
\usepackage{algorithmic}
\usepackage{algorithm}
\usepackage{array}
\usepackage[caption=false,font=normalsize,labelfont=sf,textfont=sf]{subfig}
\usepackage{textcomp}
\usepackage{stfloats}
\usepackage{url}
\usepackage{verbatim}
\usepackage[numbers,sort&compress]{natbib}
\usepackage{bm}
\usepackage{amssymb} 
\usepackage{graphicx}          % Include this line if your 
\newtheorem{assumption}{Assumption}

\newtheorem{proposition}{Proposition}
\newtheorem{lemma}{Lemma}
\newtheorem{theorem}{Theorem}

\newtheorem{remark}{Remark}
\newtheorem{problem}{Problem}

\newcommand{\col}{\operatorname{col}}

\newcommand{\cone}{\operatorname{cone}}

\makeatletter
\newenvironment{proof}[1][Proof]{%
  \def\Elproofname{#1.}\pf
}{\endpf}
\def\@opargbegintheorem#1#2#3{%
  \trivlist
  \item[\hskip\labelsep{\bfseries #1\ #2.}]%
  {\normalfont\bfseries\raggedright (#3)\par}\nobreak\itshape\ignorespaces
}
\makeatother
\begin{document}
\sloppy

\begin{frontmatter}
%\runtitle{Insert a suggested running title}  % Running title for regular 
                                              % papers but only if the title  
                                              % is over 5 words. Running title 
                                              % is not shown in output.

\title{Reference-Governed Distributed Safe Gradient Flow for Safe Optimal Output Agreement of Multi-Agent Systems \thanksref{footnoteinfo}} % Title, preferably not more 
                                                % than 10 words.

\thanks[footnoteinfo]{This paper was not presented at any IFAC 
meeting. Corresponding author Bo Yang. }

\author[China]{Zhanglin Shangguan}\ead{ditto331@sjtu.edu.cn},    % Add the 
\author[U]{Wei Xiao}\ead{weixy@mit.edu},
\author[China]{Bo Yang}\ead{bo.yang@sjtu.edu.cn},               % e-mail address 
\author[China]{Xinping Guan}\ead{xpguan@sjtu.edu.cn}.

\address[China]{Department of Automation and Intelligent Sensing, Shanghai Jiao Tong University, Shanghai, China}  
\address[U]{Computer Science and Artificial Intelligence Laboratory, Massachusetts Institute of Technology, Cambridge, USA}
% Please supply                                              
% \address[Rome]{Senate House, Rome}             % full addresses
% \address[Baiae]{The White House, Baiae}        % here.

\begin{keyword}                           % Five to ten keywords,  
Distributed safe gradient flow; Feedback optimization; Control barrier function; Dynamic safety margin.               % chosen from the IFAC 
\end{keyword}                             % keyword list or with the 
                                          % help of the Automatica 
                                          % keyword wizard

\begin{abstract}
This paper studies safe optimal output agreement for nonlinear multi-agent systems with output safety constraints. Existing safe feedback optimization methods often implement gradient-flow dynamics directly through the plant input, which may require high-order control barrier functions (HOCBFs). The resulting derivative-chain design is tuning-sensitive and can introduce additional equilibrium conditions that alter the steady-state optimal solution. We propose a reference-governed two-layer architecture that separates lower-layer output regulation from upper-layer distributed optimization. The upper layer filters the reference gradient flow through first-order control barrier function constraints, which are easier to tune and preserve the steady-state optimality structure of the original agreement problem. The lower layer uses an internal-model-based output regulator with a reference-dependent Lyapunov function, from which dynamic safety margins (DSMs) are constructed to certify transient output safety. We prove forward invariance, optimal-solution preservation under DSM-compatibility conditions, and convergence via a Lyapunov small-gain argument. Simulations validate safe convergence, show advantages over HOCBF-based feedback optimization, and demonstrate adaptive tangential objective shaping for escaping spurious equilibria induced by nonconvex obstacles.
\end{abstract}

\end{frontmatter}

\section{Introduction}

Feedback optimization has emerged as a control-oriented approach for real-time
optimization of dynamical systems, closely related to extremum seeking but
explicitly accounting for plant dynamics and closed-loop stability. Instead of
solving a static optimization problem offline, feedback optimization uses
measured outputs to steer the closed-loop steady state toward an optimal
operating point \cite{colombino2019online,liu2025feedback,rodrigues2025event}.
For networked systems, this idea leads to distributed optimal agreement, where
agents exchange local information to agree on an output value minimizing an
aggregate objective \cite{xie2019global}. Such formulations arise in
multi-robot coordination, power networks, transportation systems, and other
cyber-physical applications
\cite{pichierri2026multi,bernstein2019real,li2026distributed}. However,
optimality and stability alone are insufficient for safety-critical systems:
even when a feedback optimizer can steer the steady-state outputs to the optimal solution of the constrained steady-state problem, the physical outputs may violate
safety constraints during transients.

A common design route uses the plant's input-to-steady-state map to implement gradient or primal-dual dynamics together with consensus terms, thereby steering the steady-state outputs to an optimal agreement point
\cite{nedic2018distributed,carnevale2024nonconvex}. Inequality constraints are
often handled by projected dual or projected primal-dual dynamics
\cite{wang2024robust}, but these projections do not directly certify transient
output safety. Control barrier functions (CBFs) enforce forward invariance of safe sets, while high-order CBFs (HOCBFs) extend this idea to safety constraints with high relative degree \cite{ames2016control,xiao2021high}. CBF-filtered gradient flows have been developed for constrained optimization
\cite{allibhoy2023control} and recently embedded into feedback optimization
through an input-flow formulation, where the plant input is dynamically updated
according to constrained optimization dynamics
\cite{delimpaltadakis2024continuous,delimpaltadakis2025feedback}. However,
because safety is enforced through the input flow, output safety constraints
generally become high-relative-degree conditions and require HOCBFs. The
resulting safety filters may introduce additional stationarity conditions that
are not part of the original steady-state optimization problem, thereby
preventing convergence to the true optimum.

Another line of work separates the optimization layer from the physical
tracking layer. The upper layer generates a time-varying reference, while the
lower layer tracks this reference using a stabilizing controller. This
structure enables small-gain analysis of the optimizer--plant interconnection \cite{jiang1994small,dashkovskiy2010small},
where tracking errors perturb the optimization dynamics and reference
velocities perturb the tracker \cite{liu2021distributed}. Inequality-constrained
extensions have been studied using projected primal-dual reference dynamics
\cite{qin2023distributed}. More recently, expanded and contracted safety
constraints have been constructed separately for the optimizer and controller
layers to handle optimal solutions located on safety boundaries \cite{ma2025distributed}, but
this approach is primarily tailored to convex safety constraints. Related
reference-governor methods drive a pre-stabilized system by an auxiliary
reference that moves toward a target while preserving constraints
\cite{nicotra2018explicit,garone2015explicit}. Dynamic safety margins (DSMs)
further quantify, through a reference-dependent Lyapunov function, the
remaining safety margin during transient tracking and can be interpreted as
CBFs for an augmented state-reference system
\cite{freire2026using,nakano2026optimization}. These ideas suggest enforcing safety at the reference-dynamics level rather
than through HOCBFs in the input flow. Nevertheless, if the reference dynamics are implemented as safe gradient flows filtered by CBF-based quadratic programs (CBF-QPs), the upper-layer optimizer may still suffer from undesired equilibria caused by active safety constraints. This issue becomes particularly pronounced for nonconvex safe sets, where the safety correction imposed by active CBF constraints can counteract the nominal descent direction along the boundary of the safe reference set, producing spurious boundary equilibria, including asymptotically stable ones \cite{tan2024undesired,REIS2026113032}.

This paper develops a reference-governed distributed safe gradient-flow
framework for safe optimal output agreement. The goal is to guarantee strict
transient output safety while preserving the optimality of the original static
agreement problem. We also examine nonconvex obstacle constraints, where
CBF-filtered flows may admit spurious stable boundary equilibria. Building on the gradient-similarity-based tangential excitation design developed in
\cite{shangguan2026synthesizing}, we introduce an adaptive tangential objective
shaping mechanism that locally changes the boundary geometry of the reference
gradient flow and can turn attracting nonconvex-induced equilibria into
saddle-type ones without relaxing safety constraints.

The main contributions are summarized as follows:
\begin{itemize}
    \item We propose a two-layer architecture that separates
    internal-model-based output regulation from distributed safe optimization.
    Using a reference-dependent tracking Lyapunov function, we construct DSMs
    that certify transient output safety during reference motion.

    \item We design a distributed reference governor that filters the nominal
    gradient-consensus flow through first-order CBF constraints on the
    reference dynamics. This avoids HOCBF constructions in the plant-input
    channel and preserves the KKT geometry of the static optimal agreement
    problem under suitable DSM-compatibility conditions.

    \item We prove convergence of the coupled tracking-governor dynamics by a
    QP perturbation estimate and a small-gain argument. Simulations validate
    safe convergence, compare against HOCBF-based and projected primal-dual
    feedback optimization baselines, and demonstrate that adaptive tangential
    objective shaping can help escape spurious equilibria induced by
    nonconvex obstacles.
\end{itemize}

\textbf{Notation.}
For vectors $z_1,\ldots,z_m$, $\operatorname{col}(z_1,\ldots,z_m)$ denotes their vertical concatenation. For a matrix $A$, $\ker A:=\{x:Ax=0\}$ denotes its null space, and $A\otimes B$ denotes the Kronecker product. The symbol $I_p$ denotes the $p$-dimensional identity matrix, and $1_N$ denotes the $N$-dimensional vector of all ones. For vectors $v_1,\ldots,v_m$, $\operatorname{span}\{v_1,\ldots,v_m\}$ denotes their linear span, and $\operatorname{cone}\{v_1,\ldots,v_m\}:={\sum_{\ell=1}^m\alpha_\ell v_\ell:\alpha_\ell\ge0}$ denotes their conic hull. 

\section{Reference-Governed Safe Output Agreement}
\label{sec:problem}

Consider a network of $N$ nonlinear agents indexed by
$\mathcal N:=\{1,\ldots,N\}$. Agent $i$ is described by
\begin{equation}
    \dot x_i=f_i(x_i,u_i),\qquad y_i=\ell_i(x_i),
    \label{eq:agent}
\end{equation}
where $x_i\in\mathbb R^{n_i}$, $u_i\in\mathbb R^{m_i}$,
and $y_i\in\mathbb R^p$ are the state, input, and output,
respectively.

The output safety requirement of agent $i$ is
\begin{equation}
    \eta_i(y_i)\ge0,\qquad i\in\mathcal N,
    \label{eq:output-safety}
\end{equation}
where $\eta_i(y_i):=\col\big(\eta_{ik}(y_i)\big)_{k\in\mathcal K_i}\in\mathbb R^{q_i}.$

All vector inequalities are understood componentwise. The constraints
\eqref{eq:output-safety} may encode workspace limits, output-level
collision-avoidance constraints after responsibility allocation, or other
task-dependent safety specifications.

The agents exchange information over a weighted directed graph
$\mathcal G=(\mathcal N,\mathcal E,A)$ with adjacency matrix
$A=[a_{ij}]$. Its Laplacian is $L=[l_{ij}]$, where $ l_{ii}=\sum_{j\in\mathcal N}a_{ij}, l_{ij}=-a_{ij}, i\ne j.$ Let $L_\otimes:=L\otimes I_p$.

\begin{assumption}[Communication graph]
\label{ass:graph}
The graph $\mathcal G$ is strongly connected and weight-balanced. Hence $\ker L=\ker L^\top=\operatorname{span}\{1_N\}.$
\end{assumption}

Each agent has a local objective function
$c_i:\mathbb R^p\to\mathbb R$, which evaluates the cost associated with an assigned output value.

\begin{assumption}[Local objectives and output feasibility]
\label{ass:cost}
For each $i\in\mathcal N$, the functions $c_i$ and
$\eta_{ik}$, $k\in\mathcal K_i$, are continuously differentiable, and there
exist $\omega_i,\vartheta_i>0$ such that, for all $r_1,r_2\in\mathbb R^p$,
$\big(\nabla c_i(r_1)-\nabla c_i(r_2)\big)^\top(r_1-r_2)\ge
\omega_i\|r_1-r_2\|^2$ and
$\|\nabla c_i(r_1)-\nabla c_i(r_2)\|\le\vartheta_i\|r_1-r_2\|$.
The common output-feasible set
$\mathcal S:=\{r\in\mathbb R^p:\eta_i(r)\ge0,\ i\in\mathcal N\}$ is
nonempty; when global optimality is claimed, $\mathcal S$ is assumed convex.
\end{assumption}

Before considering the agent dynamics, the desired agreement value is
defined by the following static distributed optimization problem:
\begin{equation}
\begin{aligned}
    \min_{r_1,\ldots,r_N\in\mathbb R^p}\quad&
        \sum_{i=1}^N c_i(r_i)\\
    \mathrm{s.t.}\quad&
        r_i=r_j,\qquad i,j\in\mathcal N,\\
    &
        \eta_i(r_i)\ge0,\qquad i\in\mathcal N .
\end{aligned}
\label{eq:static-distributed-optimization}
\end{equation}
Equivalently, because all feasible solutions satisfy the agreement constraint, \eqref{eq:static-distributed-optimization} reduces to $\min_{r\in\mathbb R^p} C(r):=\sum_{i=1}^N c_i(r), \mathrm{s.t.} r\in\mathcal S.$

Under Assumption~\ref{ass:cost}, $C$ is strongly convex with parameter
$\sum_{i=1}^N\omega_i$ and has Lipschitz continuous gradient with constant
$\sum_{i=1}^N\vartheta_i$. Hence, if $\mathcal S$ is convex,
\eqref{eq:static-distributed-optimization} has a unique global optimal solution, denoted by $r^\star$. 

When the output variables are generated by the nonlinear agents
\eqref{eq:agent}, the static problem becomes a control problem: make the
physical outputs converge to the optimal solution of
\eqref{eq:static-distributed-optimization} while satisfying the output safety
constraints along the entire trajectory.

\begin{problem}[Safe optimal output agreement]
\label{prob:safe-optimal-output-agreement}
Design distributed controllers $u_i$, using only locally available
information from neighboring agents, such that, for every admissible initial
condition, $\eta_i(y_i(t))\ge0$ for all $t\ge0$ and
$\lim_{t\to\infty}y_i(t)=y_i^\star=r^\star$, $i\in\mathcal N$, where
$r^\star$ denotes the optimal solution of \eqref{eq:static-distributed-optimization}.
\end{problem}

Problem~\ref{prob:safe-optimal-output-agreement} couples optimization,
agreement, tracking, and safety at the plant-input level. To solve it in a
modular way, we introduce a two-layer architecture. The lower layer is a
tracking controller that makes each physical output follow a commanded
reference, while the upper layer is a distributed reference governor that
generates these references and drives them toward the optimal solution $r^\star$.

For each agent $i\in\mathcal N$, let $g_i\in\mathbb R^p$ denote the reference
supplied by the reference governor to the lower tracking layer. The lower tracking layer may be dynamic, as is common in internal-model-based nonlinear output regulation~\cite{huang2004nonlinear}. We therefore introduce the augmented tracking-layer state $\chi_i:=\col(x_i,\zeta_i)$, where $\zeta_i$ collects the states of the dynamic tracking controller. For instance, in a PI tracking controller, $\zeta_i$ can be the integral state of the tracking error; in an internal-model regulator, it represents the compensator state used to generate the required steady-state input. This construction will be specified in detail in Section~\ref{sec:dsm}; if the tracking controller is static, then $\zeta_i$ is absent and $\chi_i=x_i$. The input applied to the plant is generated by the lower tracking layer and is denoted by $u_i=u_i(\chi_i,g_i).$ Accordingly, the closed-loop tracking dynamics of agent $i$ are written in the
compact form
\begin{equation}
    \dot\chi_i
    =
    \mathcal F_i\big(\chi_i,u_i(\chi_i,g_i)\big),
    \qquad
    y_i=h_i(\chi_i)=\ell_i(x_i).
    \label{eq:agent-tracking-closed-loop}
\end{equation}
The detailed realization of the tracking layer will be specified in
Section~\ref{sec:dsm}; here, \eqref{eq:agent-tracking-closed-loop} is only used
as a compact problem-level representation.

Following the reference-governor viewpoint, we regard $\rho_i:=\dot g_i$ as the virtual input of the reference dynamics. Thus, for each
$i\in\mathcal N$, the augmented system seen by the reference governor is
\begin{equation}
    \begin{pmatrix}
        \dot\chi_i\\
        \dot g_i
    \end{pmatrix}
    =
    \begin{pmatrix}
        \mathcal F_i\big(\chi_i,u_i(\chi_i,g_i)\big)\\
        \rho_i
    \end{pmatrix},
    \quad
    y_i=h_i(\chi_i).
    \label{eq:augmented-agent}
\end{equation}

For a fixed reference $g_i\equiv \bar g_i$, let
$\chi_i^\star(\bar g_i)$ denote the corresponding nominal tracking-layer
equilibrium. At this equilibrium, the lower layer is required to achieve output
matching, namely
\begin{equation}
    \mathcal F_i\big(
        \chi_i^\star(\bar g_i),
        u_i(\chi_i^\star(\bar g_i),\bar g_i)
    \big)=0,
    \quad
    h_i\big(\chi_i^\star(\bar g_i)\big)=\bar g_i .
    \label{eq:agent-frozen-reference-equilibrium}
\end{equation}
Therefore, the steady-state output safety constraint induced by
$\eta_{ik}(y_i)\ge0$ can be expressed in the reference space as
$\eta_{ik}(\bar g_i)\ge0$. This motivates the steady-state admissible
reference set
\begin{equation}
    \mathcal G_i
    :=
    \left\{
        g_i\in\mathbb R^p
        \mid
        \eta_{ik}(g_i)\ge0,\ 
        k\in\mathcal K_i
    \right\}.
    \label{eq:steady-state-admissible-reference-set}
\end{equation}

The reference governor is designed so that
$\lim_{t\to\infty}g_i(t)=r^\star$ while keeping
$g_i(t)\in\mathcal G_i$ for all $t\ge0$.

The proposed reference-governed two-layer architecture is illustrated in
Fig.~\ref{fig:sgf_framework}.
\begin{figure}[t]
    \centering
    \includegraphics[width=0.9\linewidth]{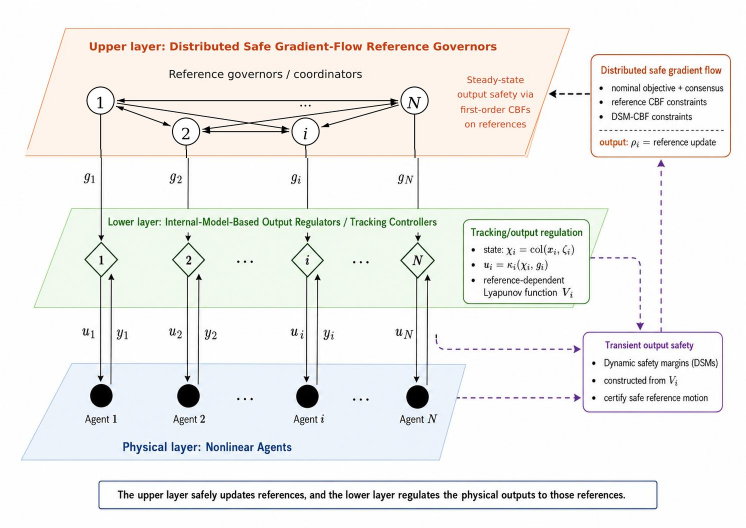}
    \caption{Reference-governed distributed safe-gradient-flow framework.}
    \label{fig:sgf_framework}
\end{figure}

However, steady-state admissibility of $g_i(t)$ does not by itself guarantee
transient output safety: during tracking,
$y_i(t)=h_i(\chi_i(t))\neq g_i(t)$ in general, so
$\eta_{ik}(g_i(t))\ge0$ does not imply $\eta_{ik}(y_i(t))\ge0$. Thus,
transient safety must be enforced by regulating the reference velocity
$\rho_i$.

Let $\mathcal A_0$ denote the admissible set of initial conditions for the
augmented system \eqref{eq:augmented-agent}. The upper-layer reference governor
designs distributed dynamics
\begin{equation}
    \dot g_i=\rho_i,\qquad i\in\mathcal N,
    \label{eq:reference-governor}
\end{equation}
where $\rho_i$ uses only local and neighboring information, such that, for
every $(\chi(0),g(0))\in\mathcal A_0$, the transient safety constraints
$\eta_{ik}(y_i(t))\ge0$, the reference constraints $g_i(t)\in\mathcal G_i$,
and the convergence limits $g_i(t),y_i(t)\to r^\star$ all hold.

\section{Transient Margins in the Tracking Layer}
\label{sec:dsm}

This section constructs the transient margins used by the reference governor.
A dynamic safety margin (DSM) measures, through a Lyapunov function, the
remaining transient energy before a safety boundary can be reached. Here the
energy is measured from the nominal output-regulation state associated with a
frozen reference $g_i$, not from the origin. The DSMs are therefore transient
tracking certificates, whereas the constraints in
\eqref{eq:steady-state-admissible-reference-set} are steady-state reference
admissibility constraints.

\subsection{Internal-model tracking layer}
\label{subsec:internal-model-tracking-layer}

For a fixed reference $g_i$, exact output regulation of agent $i$ requires a
plant state $\pi_i(g_i)$ and a steady-state input $u_i^{\rm ss}(g_i)$ satisfying
the local regulator equations
\begin{equation}
    0=f_i(\pi_i(g_i),u_i^{\rm ss}(g_i)),
    \qquad
    \ell_i(\pi_i(g_i))=g_i .
    \label{eq:agent-frozen-regulator-equations}
\end{equation}
The internal model dynamically generates this steady-state compensation when it is not explicitly known or robustly implementable. 

Accordingly, the tracking controller of agent $i$ contains an internal-model state $\zeta_i$ and is written in the general form \cite{1369396}
\begin{equation}
    \dot\zeta_i=\varphi_i(x_i,\zeta_i,g_i),
    \qquad
    u_i=\kappa_i(x_i,\zeta_i,g_i).
    \label{eq:agent-internal-model-controller}
\end{equation}
The state $\zeta_i$ is a controller state used to reproduce the compensating
signal implied by the regulator equations.

Define $\chi_i:=\col(x_i,\zeta_i)$ and $h_i(\chi_i):=\ell_i(x_i)$. For a frozen
reference $g_i$, the local augmented closed-loop tracking dynamics induced by
\eqref{eq:agent-internal-model-controller} are denoted compactly by
\begin{equation}
    \dot\chi_i=\mathcal F_i^{\rm cl}(\chi_i,g_i).
    \label{eq:agent-augmented-frozen-tracking-dynamics}
\end{equation}

The controller is assumed to be designed so that, for every constant
$g_i\in\mathcal G_i$, there exists a nominal output-regulation state
$\chi_i^\star(g_i)=\col(\pi_i(g_i),\theta_i(g_i))$ satisfying
\begin{equation}
    \mathcal F_i^{\rm cl}(\chi_i^\star(g_i),g_i)=0,
    \qquad
    h_i(\chi_i^\star(g_i))=g_i .
    \label{eq:agent-closed-loop-regulator-equations}
\end{equation}
Here $\theta_i(g_i)$ is the internal-model state realizing
$u_i^{\rm ss}(g_i)=\kappa_i(\pi_i(g_i),\theta_i(g_i),g_i)$. The tracking
coordinate is $\tilde\chi_i:=\chi_i-\chi_i^\star(g_i)$; for frozen $g_i$, the
tracking objective $\tilde\chi_i(t)\to0$ implies $h_i(\chi_i(t))\to g_i$.

\subsection{Internal-model tracking Lyapunov functions}
\label{subsec:internal-model-tracking-lyapunov}

For every $g_i\in\mathcal G_i$, assume that the frozen-reference system
\eqref{eq:agent-augmented-frozen-tracking-dynamics} admits a continuously
differentiable Lyapunov function
$V_i:\mathbb R^{\dim\chi_i}\times\mathcal G_i\to\mathbb R_{\ge0}$ measuring
the distance from $\chi_i$ to $\chi_i^\star(g_i)$.

More precisely, assume that for each $g_i\in\mathcal G_i$ there exists a domain $\mathcal D_i(g_i)\subseteq\mathbb R^{\dim\chi_i}$ containing $\chi_i^\star(g_i)$ and class-$\mathcal K$ functions $\alpha_{i,1},\alpha_{i,2}$ such that, for all $\chi_i\in\mathcal D_i(g_i)$,
\begin{equation}
    \alpha_{i,1}\!\left(\|\chi_i-\chi_i^\star(g_i)\|\right)
    \le
    V_i(\chi_i,g_i)
    \le
    \alpha_{i,2}\!\left(\|\chi_i-\chi_i^\star(g_i)\|\right).
    \label{eq:agent-tracking-lyapunov-bounds}
\end{equation}
Moreover, along the frozen-reference tracking dynamics,
\begin{equation}
    \frac{\partial V_i}{\partial\chi_i}(\chi_i,g_i)
    \mathcal F_i^{\rm cl}(\chi_i,g_i)
    \le
    -W_i(\chi_i,g_i)
    \le0,
    \label{eq:agent-tracking-lyapunov-decay}
\end{equation}
where $W_i$ is continuous and nonnegative. The largest invariant subset of $\{\chi_i\in\mathcal D_i(g_i): W_i(\chi_i,g_i)=0\}$ under $\dot\chi_i=\mathcal F_i^{\rm cl}(\chi_i,g_i)$ is the singleton $\{\chi_i^\star(g_i)\}$. Hence, if $g_i$ is held constant and $\chi_i(0)\in\mathcal D_i(g_i)$, then $\chi_i(t)\to\chi_i^\star(g_i)$ and $h_i(\chi_i(t))\to g_i$.

This Lyapunov function is the energy measure used below. Since
$V_i(\cdot,g_i)$ is nonincreasing for frozen $g_i$, its sublevel sets inside
$\mathcal D_i(g_i)$ are forward invariant.

\subsection{Safety DSM}
\label{subsec:safety-dsm}

Let the $k$-th output safety constraint of agent $i$ be $\eta_{ik}(y_i)\ge0$,
with safe output set $\mathcal S_{ik}^{y}:=\{y_i:\eta_{ik}(y_i)\ge0\}$ and
boundary $\partial\mathcal S_{ik}^{y}:=\{y_i:\eta_{ik}(y_i)=0\}$. For a frozen
reference $g_i$, lift this boundary to the augmented tracking-state space as
\begin{equation}
    \partial\mathcal X_{ik}^{\mathsf S}(g_i)
    :=
    \left\{
    \chi_i\in\mathcal D_i(g_i):
    \eta_{ik}\big(h_i(\chi_i)\big)=0
    \right\}.
    \label{eq:agent-lifted-output-safety-boundary}
\end{equation}
The associated safety energy threshold is
\begin{equation}
    \Gamma_{ik}^{\mathsf S}(g_i)
    :=
    \inf_{\chi_i\in\partial\mathcal X_{ik}^{\mathsf S}(g_i)}
    V_i(\chi_i,g_i),
    \label{eq:agent-safety-energy-threshold}
\end{equation}
with the convention $\Gamma_{ik}^{\mathsf S}(g_i)=+\infty$ if the lifted
boundary is empty. Equivalently, this minimizes $V_i$ over all augmented states
that can realize boundary outputs, so no one-to-one output map is required.

The safety DSM associated with the constraint $\eta_{ik}(y_i)\ge0$ is
\begin{equation}
    \mathsf M_{ik}^{\mathsf S}(\chi_i,g_i)
    :=
    \Gamma_{ik}^{\mathsf S}(g_i)-V_i(\chi_i,g_i).
    \label{eq:agent-safety-dsm}
\end{equation}
If $\mathsf M_{ik}^{\mathsf S}(\chi_i,g_i)>0$, the sublevel set
$\Omega_i(\chi_i,g_i):=\{\xi_i\in\mathcal D_i(g_i):V_i(\xi_i,g_i)\le
V_i(\chi_i,g_i)\}$ does not intersect
$\partial\mathcal X_{ik}^{\mathsf S}(g_i)$. Since this sublevel set is forward
invariant for frozen $g_i$, the transient output cannot reach
$\eta_{ik}(y_i)=0$ before tracking converges.

\subsection{Stability-domain DSM}
\label{subsec:stability-domain-dsm}

When $V_i(\cdot,g_i)$ is valid only on a restricted domain $\mathcal D_i(g_i)$,
the augmented state must also remain inside that Lyapunov domain. Define
\begin{equation}
    \Gamma_i^{\mathsf D}(g_i)
    :=
    \inf_{\chi_i\in\partial\mathcal D_i(g_i)}
    V_i(\chi_i,g_i),
    \label{eq:agent-domain-energy-threshold}
\end{equation}
provided $\mathcal D_i(g_i)\ne\mathbb R^{\dim\chi_i}$; for global Lyapunov
functions this margin is omitted. Given $\varepsilon_{iD}\in(0,1)$, define
\begin{equation}
    \mathsf M_i^{\mathsf D}(\chi_i,g_i)
    :=
    (1-\varepsilon_{iD})\Gamma_i^{\mathsf D}(g_i)-V_i(\chi_i,g_i).
    \label{eq:agent-stability-domain-dsm}
\end{equation}
The factor $1-\varepsilon_{iD}$ keeps the admissible sublevel set strictly
inside $\mathcal D_i(g_i)$. Thus $\mathsf M_i^{\mathsf D}\ge0$ certifies
Lyapunov-domain validity, while $\mathsf M_{ik}^{\mathsf S}\ge0$ certifies
output safety; both margins only regulate transient reference motion.

\section{Distributed Safe Gradient Flow}
\label{sec:dsgf}

This section designs the upper-layer reference governor
\eqref{eq:reference-governor}. It starts from the nominal output-gradient
direction $-\nabla c_i(y_i)$ and filters $\rho_i$ through first-order CBF
conditions $\dot b+\alpha_b b\ge0$, which make $\{b\ge0\}$ forward invariant.
Here $b$ is chosen as $\eta_{ik}(g_i)$,
$\mathsf M_{ik}^{\mathsf S}(\chi_i,g_i)$, or
$\mathsf M_i^{\mathsf D}(\chi_i,g_i)$ to enforce reference admissibility,
transient output safety, and Lyapunov-domain validity.

First, reference agreement is imposed by the equality-flow constraint
\begin{equation}
    \sum_{j\in\mathcal N}a_{ji}(\rho_i-\rho_j)
    +
    \alpha_c\sum_{j\in\mathcal N}a_{ji}(g_i-g_j)=0,
    \qquad i\in\mathcal N,
    \label{eq:consensus-flow-local}
\end{equation}
where $\alpha_c>0$. Equivalently,
$L_\otimes^\top\rho+\alpha_cL_\otimes^\top g=0$, so the consensus residual
decays exponentially.

Second, applying the CBF condition to $\eta_{ik}(g_i)$ gives
\begin{equation}
    \nabla\eta_{ik}(g_i)^\top\rho_i
    +
    \alpha_{\eta}\eta_{ik}(g_i)
    \ge0,
    \qquad
    \alpha_{\eta}>0 .
    \label{eq:ss-cbf-local}
\end{equation}
If $g_i(0)\in\mathcal G_i$, then $g_i(t)\in\mathcal G_i$ for all $t\ge0$.

Third, applying the CBF condition to the safety DSMs gives
\begin{equation}
\begin{aligned}
    &
    \nabla_{\chi_i}\mathsf M_{ik}^{\mathsf S}(\chi_i,g_i)^\top \dot\chi_i
    +
    \nabla_{g_i}\mathsf M_{ik}^{\mathsf S}(\chi_i,g_i)^\top \rho_i  \\
    &\hspace{3em}
    +
    \alpha_{\mathsf S}
    \mathsf M_{ik}^{\mathsf S}(\chi_i,g_i)
    \ge0,
    \qquad
    i\in\mathcal N,\ k\in\mathcal K_i,
\end{aligned}
\label{eq:safety-dsm-cbf-local}
\end{equation}
where $\alpha_{\mathsf S}>0$ and $\dot\chi_i$ is the first component of
\eqref{eq:augmented-agent}.

Finally, let $\mathcal N_{\mathsf D}\subseteq\mathcal N$ collect the agents
requiring a stability-domain DSM. For each $i\in\mathcal N_{\mathsf D}$,
\begin{equation}
\begin{aligned}
    &
    \nabla_{\chi_i}\mathsf M_i^{\mathsf D}(\chi_i,g_i)^\top \dot\chi_i
    +
    \nabla_{g_i}\mathsf M_i^{\mathsf D}(\chi_i,g_i)^\top \rho_i  \\
    &\hspace{3em}
    +
    \alpha_{\mathsf D}
    \mathsf M_i^{\mathsf D}(\chi_i,g_i)
    \ge0,
\end{aligned}
\label{eq:stability-dsm-cbf-local}
\end{equation}
where $\alpha_{\mathsf D}>0$; this constraint is omitted for agents with global
tracking Lyapunov functions.

The distributed safe gradient-flow reference governor is defined as the
solution of the following network-sparse QP:
\begingroup
\small
\setlength{\jot}{2pt}
\setlength{\arraycolsep}{2pt}
\begin{equation}
\begin{aligned}
    \rho^\star
    &=
    \arg\min_{\rho_1,\ldots,\rho_N}\quad
        \frac12\sum_{i=1}^N
        \left\|\rho_i+\nabla c_i(y_i)\right\|^2                                      \\[0.2em]
    \mathrm{s.t.}\quad
    &
    \sum_{j\in\mathcal N}a_{ji}(\rho_i-\rho_j)
    +
    \alpha_c\sum_{j\in\mathcal N}a_{ji}(g_i-g_j)=0,
    \quad i\in\mathcal N,                                                          \\[0.2em]
    &
    \nabla\eta_{ik}(g_i)^\top\rho_i
    +
    \alpha_{\eta}\eta_{ik}(g_i)
    \ge0,
    \quad i\in\mathcal N,\ k\in\mathcal K_i,                                       \\[0.2em]
    &
    \nabla_{\chi_i}\mathsf M_{ik}^{\mathsf S}(\chi_i,g_i)^\top \dot\chi_i
    +
    \nabla_{g_i}\mathsf M_{ik}^{\mathsf S}(\chi_i,g_i)^\top \rho_i                 \\
    &\hspace{4em}
    +
    \alpha_{\mathsf S}
    \mathsf M_{ik}^{\mathsf S}(\chi_i,g_i)
    \ge0,
    \quad i\in\mathcal N,\ k\in\mathcal K_i,                                      \\[0.2em]
    &
    \nabla_{\chi_i}\mathsf M_i^{\mathsf D}(\chi_i,g_i)^\top \dot\chi_i
    +
    \nabla_{g_i}\mathsf M_i^{\mathsf D}(\chi_i,g_i)^\top \rho_i                   \\
    &\hspace{4em}
    +
    \alpha_{\mathsf D}
    \mathsf M_i^{\mathsf D}(\chi_i,g_i)
    \ge0,
    \quad i\in\mathcal N_{\mathsf D}.
\end{aligned}
\label{eq:dsgf-qp-local}
\end{equation}
\endgroup
The upper-layer reference governor is then
\begin{equation}
    \dot g_i=\rho_i^\star,
    \qquad i\in\mathcal N .
    \label{eq:dsgf-reference-governor}
\end{equation}

The objective keeps $\rho_i$ close to $-\nabla c_i(y_i)$, while the constraints
preserve agreement, steady-state admissibility, transient output safety, and
Lyapunov-domain validity. All objective and CBF terms are local; only
\eqref{eq:consensus-flow-local} couples neighboring velocities, so the QP has
a sparse network structure suitable for distributed QP solvers.

By the CBF comparison argument, if the QP remains feasible and initially
$
    g_i(0)\in\mathcal G_i,
    \mathsf M_{ik}^{\mathsf S}(\chi_i(0),g_i(0))\ge0,
    \mathsf M_i^{\mathsf D}(\chi_i(0),g_i(0))\ge0
$
for all relevant $i$ and $k$, then these inequalities remain true for all
$t\ge0$.

% \begin{assumption}[CBF-QP feasibility and regularity]
% \label{ass:qp}
% There exists a compact set containing the closed-loop trajectory on which
% \eqref{eq:dsgf-qp-local} is feasible, the CBF data are continuously
% differentiable, a standard constraint qualification holds, and
% $(\chi,g)\mapsto\rho^\star(\chi,g)$ is locally Lipschitz.
% \end{assumption}

\begin{remark}[Trivial feasibility of the CBF constraints]
The CBF inequalities in \eqref{eq:dsgf-qp-local} inherit the
trivial-update feasibility property of DSM-based reference governors:
whenever $g_i\in\mathcal G_i$, $M^S_{ik}(\chi_i,g_i)\ge0$, and
$M^D_i(\chi_i,g_i)\ge0$, the frozen-reference choice $\rho_i=0$
satisfies the reference-CBF and DSM-CBF inequalities. This follows from
the forward invariance of the Lyapunov sublevel sets for fixed references,
as in the explicit reference governor-CBF feasibility argument of~\cite{nakano2026optimization}.
The agreement-flow equality is imposed separately.
\end{remark}

\section{Optimality Preservation of the Safe Gradient Flow}
\label{sec:optimality-preservation}

This section studies preservation of the optimal solution on the ideal regulated manifold, where $\chi_i=\chi_i^\star(g_i)$ and $y_i=g_i$. In this case, the CBF constraints can be compared directly with the KKT conditions of \eqref{eq:static-distributed-optimization}. The tracking-error perturbation is
handled later in the convergence analysis.

\subsection{Ideal regulated flow and steady-state KKT conditions}
\label{subsec:ideal-regulated-flow-kkt}

Write the agreement constraint as $L_\otimes^\top r=0$, with
$r:=\col(r_1,\ldots,r_N)$. A point
$r^\star=\col(r_1^\star,\ldots,r_N^\star)$ satisfies the KKT conditions of
\eqref{eq:static-distributed-optimization} if there exist multipliers
$\mu^\star$ and $\lambda_{ik}^\star\ge0$ such that, for all $i,k$,
\begin{equation}
\begin{aligned}
    &
    \nabla c_i(r_i^\star)
    +
    (L_\otimes\mu^\star)_i
    -
    \sum_{k\in\mathcal K_i}
    \lambda_{ik}^\star\nabla\eta_{ik}(r_i^\star)
    =0,\\
    &
    (L_\otimes^\top r^\star)_i=0,
    \qquad
    \eta_{ik}(r_i^\star)\ge0,\\
    &
    \lambda_{ik}^\star\ge0,
    \qquad
    \lambda_{ik}^\star\eta_{ik}(r_i^\star)=0 .
\end{aligned}
\label{eq:original-kkt}
\end{equation}
Here $(\cdot)_i$ denotes the $i$th $p$-dimensional block. Under
Assumption~\ref{ass:graph}, $L_\otimes^\top r^\star=0$ implies agreement.

We now evaluate the constraints of the safe gradient flow on the regulated
manifold. Define the steady-state values and reference normals of the safety
DSMs by
\begin{equation}
\begin{aligned}
    \bar{\mathsf M}_{ik}^{\mathsf S}(g_i)
    &:={\mathsf M}_{ik}^{\mathsf S}\big(\chi_i^\star(g_i),g_i\big),\\
    \bar n_{ik}^{\mathsf S}(g_i)
    &:=
    \nabla_{g_i}{\mathsf M}_{ik}^{\mathsf S}
    \big(\chi_i^\star(g_i),g_i\big).
\end{aligned}
\label{eq:ss-safety-dsm-value-normal}
\end{equation}
where $\nabla_{g_i}$ denotes the partial gradient with respect to the
reference argument. Similarly, for agents with a stability-domain DSM, define
\begin{equation}
\begin{aligned}
    \bar{\mathsf M}_{i}^{\mathsf D}(g_i)
    &:={\mathsf M}_{i}^{\mathsf D}\big(\chi_i^\star(g_i),g_i\big),\\
    \bar n_{i}^{\mathsf D}(g_i)
    &:=
    \nabla_{g_i}{\mathsf M}_{i}^{\mathsf D}
    \big(\chi_i^\star(g_i),g_i\big).
\end{aligned}
\label{eq:ss-domain-dsm-value-normal}
\end{equation}

The ideal regulated safe gradient flow is obtained from
\eqref{eq:dsgf-qp-local} by setting $y_i=g_i$,
$\chi_i=\chi_i^\star(g_i)$, and hence $\dot\chi_i=0$. It is the following QP:
\begingroup
\scriptsize
\setlength{\jot}{1pt}
\begin{equation}
\begin{aligned}
    \rho^{\rm id}
    =&
    \arg\min_{\rho}\quad
        \frac12\sum_{i=1}^N
        \left\|\rho_i+\nabla c_i(g_i)\right\|^2\\
    \mathrm{s.t.}\quad&
        \sum_{j\in\mathcal N}a_{ji}(\rho_i-\rho_j)
        +
        \alpha_c\sum_{j\in\mathcal N}a_{ji}(g_i-g_j)=0,
        \quad i\in\mathcal N,\\
    &
        \nabla\eta_{ik}(g_i)^\top\rho_i
        +
        \alpha_{\eta}\eta_{ik}(g_i)
        \ge0,
        \quad i\in\mathcal N,\ k\in\mathcal K_i,\\
    &
        \bar n_{ik}^{\mathsf S}(g_i)^\top\rho_i
        +
        \alpha_{\mathsf S}
        \bar{\mathsf M}_{ik}^{\mathsf S}(g_i)
        \ge0,
        \quad i\in\mathcal N,\ k\in\mathcal K_i,\\
    &
        \bar n_i^{\mathsf D}(g_i)^\top\rho_i
        +
        \alpha_{\mathsf D}
        \bar{\mathsf M}_i^{\mathsf D}(g_i)
        \ge0,
        \quad i\in\mathcal N_{\mathsf D}.
\end{aligned}
\label{eq:ideal-regulated-sgf}
\end{equation}
\endgroup
The last constraint is omitted for agents with global tracking Lyapunov
functions; \eqref{eq:ideal-regulated-sgf} is only the regulated-manifold
version of \eqref{eq:dsgf-qp-local}.

Define
$\mathcal A_i^\eta(g_i):=\{k\in\mathcal K_i:\eta_{ik}(g_i)=0\}$ and
$\mathcal A_i^{\mathsf S}(g_i):=\{k\in\mathcal K_i:
\bar{\mathsf M}_{ik}^{\mathsf S}(g_i)=0\}$. A regulated reference $g^\star$ is
DSM-compatible if $\bar{\mathsf M}_{ik}^{\mathsf S}(g_i^\star)\ge0$ and, for
each $k\in\mathcal A_i^{\mathsf S}(g_i^\star)$,
\begin{equation}
    \bar n_{ik}^{\mathsf S}(g_i^\star)
    \in
    \cone
    \left\{
        \nabla\eta_{i\ell}(g_i^\star):
        \ell\in\mathcal A_i^\eta(g_i^\star)
    \right\}.
    \label{eq:safety-dsm-normal-compatibility}
\end{equation}
For the stability-domain DSM, impose the standing regulated-state condition
\begin{equation}
    \bar{\mathsf M}_i^{\mathsf D}(g_i^\star)>0,
    \qquad i\in\mathcal N_{\mathsf D}.
    \label{eq:domain-dsm-strict-inactivity-condition}
\end{equation}
This condition says that the Lyapunov-domain margin is not an additional
steady-state optimization constraint; if the tracking Lyapunov function is
global, the stability-domain DSM is omitted.

\begin{theorem}[Ideal equilibrium--KKT equivalence]
\label{thm:ideal-equilibrium-kkt}
Suppose the QP in \eqref{eq:ideal-regulated-sgf} is regular at the considered
point, and let $g^\star$ be DSM-compatible in the sense of
\eqref{eq:safety-dsm-normal-compatibility}--\eqref{eq:domain-dsm-strict-inactivity-condition}.
Then $\rho^{\rm id}(g^\star)=0$ if and only if
$r_i^\star=g_i^\star$ satisfies the KKT conditions
\eqref{eq:original-kkt} of the static problem
\eqref{eq:static-distributed-optimization}. Consequently, under
Assumption~\ref{ass:cost}, if the feasible set is convex, the equilibrium
reference agrees with the unique optimal solution $r^\star$.
\end{theorem}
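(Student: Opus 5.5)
The plan is to compare two sets of first-order conditions directly: the KKT system of the convex QP \eqref{eq:ideal-regulated-sgf} at the candidate solution $\rho=0$, and the static KKT system \eqref{eq:original-kkt}. The QP has the strongly convex objective $\frac12\sum_i\|\rho_i+\nabla c_i(g_i)\|^2$ and affine constraints in $\rho$. Hence its KKT conditions are sufficient for optimality and identify the unique minimizer. The regularity assumption is what makes them necessary as well, since it guarantees the existence of multipliers.

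At $g=g^\star$, $\rho=0$ is the minimizer if and only if two things hold. First, $\rho=0$ is feasible: $L_\otimes^\top g^\star=0$, $\eta_{ik}(g_i^\star)\ge0$, $\bar{\mathsf M}^{\mathsf S}_{ik}(g_i^\star)\ge0$ and $\bar{\mathsf M}^{\mathsf D}_i(g_i^\star)\ge0$. Second, there exist $\nu$, $\lambda_{ik}\ge0$, $\sigma_{ik}\ge0$ and $\tau_i\ge0$ such that
\begin{equation*}
\nabla c_i(g_i^\star)+(L_\otimes\nu)_i-\sum_{k}\lambda_{ik}\nabla\eta_{ik}(g_i^\star)-\sum_k\sigma_{ik}\bar n^{\mathsf S}_{ik}(g_i^\star)-\tau_i\bar n^{\mathsf D}_i(g_i^\star)=0 .
\end{equation*}
At $\rho=0$ the constraint slacks are $\alpha_\eta\eta_{ik}$, $\alpha_{\mathsf S}\bar{\mathsf M}^{\mathsf S}_{ik}$ and $\alpha_{\mathsf D}\bar{\mathsf M}^{\mathsf D}_i$. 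With $\alpha_\eta,\alpha_{\mathsf S},\alpha_{\mathsf D}>0$, complementarity therefore reads $\lambda_{ik}\eta_{ik}=0$, $\sigma_{ik}\bar{\mathsf M}^{\mathsf S}_{ik}=0$ and $\tau_i\bar{\mathsf M}^{\mathsf D}_i=0$. So only constraints active in the steady-state sense, that is in $\mathcal A_i^\eta$ and $\mathcal A_i^{\mathsf S}$, can carry positive multipliers.

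For the direction ``$\rho^{\rm id}(g^\star)=0\Rightarrow$ KKT'', three facts combine.
\begin{itemize}
\item By \eqref{eq:domain-dsm-strict-inactivity-condition} the domain constraint is strictly inactive, so $\tau_i=0$.
\item For each $k\in\mathcal A_i^{\mathsf S}$, the compatibility condition \eqref{eq:safety-dsm-normal-compatibility} gives $\bar n^{\mathsf S}_{ik}=\sum_{\ell\in\mathcal A_i^\eta}\beta_{ik\ell}\nabla\eta_{i\ell}$ with $\beta_{ik\ell}\ge0$.
\item Setting $\lambda^\star_{i\ell}:=\lambda_{i\ell}+\sum_{k}\sigma_{ik}\beta_{ik\ell}$ gives nonnegative multipliers supported on $\mathcal A_i^\eta$, so $\lambda^\star_{i\ell}\eta_{i\ell}=0$ still holds.
\end{itemize}
Taking $\mu^\star:=\nu$, feasibility of $\rho=0$ then yields exactly \eqref{eq:original-kkt}. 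The Laplacian term carries the right sign because the equality constraint enters the Lagrangian as $\nu^\top L_\otimes^\top\rho$, whose $\rho$-gradient is $L_\otimes\nu$.

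Conversely, suppose $g^\star$ satisfies \eqref{eq:original-kkt} with multipliers $(\mu^\star,\lambda^\star)$. Take $\nu=\mu^\star$, $\lambda=\lambda^\star$ and $\sigma=\tau=0$. Feasibility of $\rho=0$ follows from three sources: the agreement and $\eta$-parts come from \eqref{eq:original-kkt}, the safety-DSM part from DSM-compatibility ($\bar{\mathsf M}^{\mathsf S}\ge0$), and the domain part from \eqref{eq:domain-dsm-strict-inactivity-condition}. The stationarity and complementarity conditions of the QP then hold verbatim. By sufficiency and uniqueness for the strongly convex QP, $\rho^{\rm id}(g^\star)=0$.

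For the final claim, Assumption~\ref{ass:graph} turns $L_\otimes^\top g^\star=0$ into $g_i^\star=\bar g$ for all $i$. Multiplying the stationarity equation by $1_N^\top\otimes I_p$ eliminates the Laplacian term, since $1_N^\top L=0$ by weight balance. This gives the aggregate condition $\nabla C(\bar g)=\sum_{i,k}\lambda^\star_{ik}\nabla\eta_{ik}(\bar g)$, i.e.\ $-\nabla C(\bar g)$ lies in the normal cone of $\mathcal S$ at $\bar g$. When $\mathcal S$ is convex and $C$ is strongly convex, this first-order condition characterizes the unique minimizer, so $\bar g=r^\star$.

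I expect the main obstacle to be justifying this last step: that the conic combination of active $\nabla\eta_{ik}$ really lies in the normal cone of the convex set $\mathcal S$. Convexity of $\mathcal S$ alone does not immediately give this without some concavity or quasi-concavity of the $\eta_{ik}$, or a constraint qualification. I would first try to invoke the standing regularity and convexity assumptions as guaranteeing that the tangent cone of $\mathcal S$ is described by the linearized active constraints. Failing that, I would state the needed qualification explicitly.

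The other delicate point is making precise that ``regular'' supplies QP multipliers at $\rho=0$. Because all QP constraints are affine in $\rho$, this holds automatically once the QP is feasible, so I expect it to be a minor issue.
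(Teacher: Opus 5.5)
Your proof of the equivalence is correct and follows the paper's proof. You write the QP KKT system at $\rho=0$, and complementarity removes the domain-DSM and inactive safety-DSM multipliers. The active safety-DSM terms are absorbed into the $\eta$-multipliers through the cone condition \eqref{eq:safety-dsm-normal-compatibility}. For the converse you take zero DSM multipliers and use uniqueness from strong convexity. Your remark that multipliers exist automatically, because the constraints are affine in $\rho$, is also right.

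The final claim is left unargued in the paper, and there the obstacle you flag is not real, so you should not add any concavity or constraint-qualification hypothesis. Take $\eta_{i\ell}(\bar g)=0$ and $r\in\mathcal S$. Convexity of $\mathcal S$ puts the segment $\bar g+t(r-\bar g)$, $t\in[0,1]$, inside $\mathcal S\subseteq\{\eta_{i\ell}\ge0\}$, and $\eta_{i\ell}$ vanishes at $t=0$. Its one-sided derivative there is therefore nonnegative: $\nabla\eta_{i\ell}(\bar g)^\top(r-\bar g)\ge0$. Your aggregated condition then gives $\nabla C(\bar g)^\top(r-\bar g)=\sum_{i,\ell}\lambda^\star_{i\ell}\nabla\eta_{i\ell}(\bar g)^\top(r-\bar g)\ge0$. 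Convexity of $C$ yields $C(r)\ge C(\bar g)$, and uniqueness gives $\bar g=r^\star$.

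Equivalently, the tangent cone of $\mathcal S$ is always contained in the linearized cone. A constraint qualification is needed only for the reverse inclusion (optimality $\Rightarrow$ KKT), which this theorem never uses.
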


\begin{proof}
Suppose first that $\rho^{\rm id}(g^\star)=0$. Then
$L_\otimes^\top g^\star=0$ and $\eta_{ik}(g_i^\star)\ge0$. Let
$\mu$, $\lambda_{ik}\ge0$, $\nu_{ik}^{\mathsf S}\ge0$, and
$\nu_i^{\mathsf D}\ge0$ be the QP multipliers. Stationarity at $\rho=0$ gives
\begin{equation}
\begin{aligned}
    0
    =
    &
    \nabla c_i(g_i^\star)
    +
    (L_\otimes\mu)_i
    -
    \sum_{k\in\mathcal K_i}
    \lambda_{ik}\nabla\eta_{ik}(g_i^\star)\\
    &-
    \sum_{k\in\mathcal K_i}
    \nu_{ik}^{\mathsf S}
    \bar n_{ik}^{\mathsf S}(g_i^\star)
    -
    \nu_i^{\mathsf D}
    \bar n_i^{\mathsf D}(g_i^\star).
\end{aligned}
\label{eq:ideal-regulated-stationarity}
\end{equation}
By complementarity, $\nu_i^{\mathsf D}=0$ and inactive safety-DSM multipliers
vanish. For active safety DSMs, compatibility gives $\beta_{ik\ell}\ge0$ such
that
\[
    \bar n_{ik}^{\mathsf S}(g_i^\star)
    =
    \sum_{\ell\in\mathcal A_i^\eta(g_i^\star)}
    \beta_{ik\ell}
    \nabla\eta_{i\ell}(g_i^\star).
\]
Absorb the active safety-DSM terms into the steady-state safety multipliers by
defining
\[
    \tilde\lambda_{i\ell}
    :=
    \lambda_{i\ell}
    +
    \sum_{k\in\mathcal A_i^{\mathsf S}(g_i^\star)}
    \nu_{ik}^{\mathsf S}\beta_{ik\ell},
    \qquad
    \ell\in\mathcal A_i^\eta(g_i^\star),
\]
and $\tilde\lambda_{i\ell}:=\lambda_{i\ell}$ otherwise. These multipliers are
nonnegative and preserve complementarity, so
\eqref{eq:ideal-regulated-stationarity} reduces to the stationarity condition
in \eqref{eq:original-kkt}.

Conversely, if $r^\star$ satisfies \eqref{eq:original-kkt}, set
$g_i^\star=r_i^\star$ and $\rho_i=0$. The constraints of
\eqref{eq:ideal-regulated-sgf} are feasible by the KKT feasibility,
DSM-compatibility, and \eqref{eq:domain-dsm-strict-inactivity-condition}. With
zero DSM multipliers, the KKT multipliers of \eqref{eq:original-kkt} satisfy
the QP KKT conditions at $\rho=0$; strong convexity in $\rho$ makes it the
unique optimal solution.
\end{proof}

Theorem~\ref{thm:ideal-equilibrium-kkt} shows that the steady-state CBFs encode
the original admissible set, while compatible safety DSMs and strictly inactive
stability-domain DSMs do not alter the optimal solution.

\subsection{Reference-space compatibility of safety DSMs}
\label{subsec:reference-space-compatibility-safety-dsm}

We now verify \eqref{eq:safety-dsm-normal-compatibility} for the exact safety
DSM $\mathsf M_{ik}^{\mathsf S}=\Gamma_{ik}^{\mathsf S}-V_i$.

\begin{proposition}[Reference-space compatibility of the exact safety DSM]
\label{prop:exact-safety-dsm-compatibility}
Fix an agent $i$ and a constraint $k\in\mathcal K_i$.

(i) If $\eta_{ik}(g_i)>0$, then
\[
    \bar{\mathsf M}_{ik}^{\mathsf S}(g_i)>0,
\]
provided the lifted safety boundary is nonempty. If the lifted boundary is
empty, the corresponding safety DSM constraint is inactive.

(ii) Let $g_i^\star$ satisfy $\eta_{ik}(g_i^\star)=0$. Suppose
$\Gamma_{ik}^{\mathsf S}$ is differentiable at $g_i^\star$,
$\chi_i^\star(g_i)$ is differentiable at $g_i^\star$, and
$\chi_i^\star(g_i^\star)$ is an interior minimizer of
$V_i(\cdot,g_i^\star)$. Suppose also that the first-order necessary condition
holds for the reference-space problem
\[
    \min_{g_i}\ \Gamma_{ik}^{\mathsf S}(g_i)
    \quad
    \mathrm{s.t.}\quad
    \eta_{i\ell}(g_i)\ge0,\ \ell\in\mathcal K_i,
\]
at $g_i^\star$; for example, this holds for the active reference
constraints. Then
\begin{equation}
    \bar{\mathsf M}_{ik}^{\mathsf S}(g_i^\star)=0,
    \label{eq:ss-safety-dsm-zero-value}
\end{equation}
and
\begin{equation}
    \bar n_{ik}^{\mathsf S}(g_i^\star)
    \in
    \cone
    \left\{
        \nabla\eta_{i\ell}(g_i^\star):
        \ell\in\mathcal A_i^\eta(g_i^\star)
    \right\}.
    \label{eq:ss-safety-dsm-normal-cone}
\end{equation}
\end{proposition}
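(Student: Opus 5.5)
For part (i), the plan is to evaluate the exact safety DSM on the regulated manifold and bound the threshold away from zero. By the sandwich bounds \eqref{eq:agent-tracking-lyapunov-bounds}, $V_i(\chi_i^\star(g_i),g_i)=0$ because $\alpha_{i,2}(0)=0$. Hence $\bar{\mathsf M}_{ik}^{\mathsf S}(g_i)=\Gamma_{ik}^{\mathsf S}(g_i)$, and it suffices to show $\Gamma_{ik}^{\mathsf S}(g_i)>0$.

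Since $h_i(\chi_i^\star(g_i))=g_i$ and $\eta_{ik}(g_i)>0$, continuity of $\eta_{ik}\circ h_i$ gives some $\delta>0$ such that $\eta_{ik}(h_i(\chi_i))>0$ on the ball $\|\chi_i-\chi_i^\star(g_i)\|<\delta$. Every point of the lifted boundary \eqref{eq:agent-lifted-output-safety-boundary} therefore satisfies $\|\chi_i-\chi_i^\star(g_i)\|\ge\delta$. The lower bound in \eqref{eq:agent-tracking-lyapunov-bounds} then gives $V_i(\chi_i,g_i)\ge\alpha_{i,1}(\delta)>0$ on that boundary. Taking the infimum yields $\Gamma_{ik}^{\mathsf S}(g_i)\ge\alpha_{i,1}(\delta)>0$. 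If the lifted boundary is empty, then $\Gamma_{ik}^{\mathsf S}=+\infty$, the margin is infinite, and the constraint is inactive.

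For the value claim \eqref{eq:ss-safety-dsm-zero-value} in part (ii), assume $\eta_{ik}(g_i^\star)=0$. Then $\chi_i^\star(g_i^\star)\in\mathcal D_i(g_i^\star)$ has output $g_i^\star$ on the boundary, so $\chi_i^\star(g_i^\star)$ belongs to the lifted boundary itself. This gives $0\le\Gamma_{ik}^{\mathsf S}(g_i^\star)\le V_i(\chi_i^\star(g_i^\star),g_i^\star)=0$, and therefore $\bar{\mathsf M}_{ik}^{\mathsf S}(g_i^\star)=0$.

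For the normal, apply the chain rule at the point $(\chi_i^\star(g_i^\star),g_i^\star)$:
\[
\bar n_{ik}^{\mathsf S}(g_i^\star)=\nabla\Gamma_{ik}^{\mathsf S}(g_i^\star)-\nabla_{g_i}V_i\big(\chi_i^\star(g_i^\star),g_i^\star\big).
\]
To remove the second term, differentiate the identity $V_i(\chi_i^\star(g_i),g_i)\equiv0$ at $g_i^\star$; this uses differentiability of $\chi_i^\star$ and $V_i\in C^1$. It gives
\[
D\chi_i^\star(g_i^\star)^\top\nabla_{\chi_i}V_i+\nabla_{g_i}V_i=0 .
\]
Because $\chi_i^\star(g_i^\star)$ is an interior minimizer of $V_i(\cdot,g_i^\star)$, we have $\nabla_{\chi_i}V_i=0$ there, hence $\nabla_{g_i}V_i=0$ and $\bar n_{ik}^{\mathsf S}(g_i^\star)=\nabla\Gamma_{ik}^{\mathsf S}(g_i^\star)$.

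Finally, $\Gamma_{ik}^{\mathsf S}\ge0$ everywhere and $\Gamma_{ik}^{\mathsf S}(g_i^\star)=0$, so $g_i^\star$ is a (global) minimizer of the reference-space problem $\min\Gamma_{ik}^{\mathsf S}$ subject to $\eta_{i\ell}\ge0$. The assumed first-order necessary condition then supplies multipliers $\beta_\ell\ge0$, vanishing on inactive constraints, with $\nabla\Gamma_{ik}^{\mathsf S}(g_i^\star)=\sum_{\ell\in\mathcal A_i^\eta(g_i^\star)}\beta_\ell\nabla\eta_{i\ell}(g_i^\star)$. This is exactly \eqref{eq:ss-safety-dsm-normal-cone}.

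The main obstacle I expect is the normal computation in part (ii). It needs the partial gradient $\nabla_{g_i}V_i$ to vanish on the regulated manifold, and it needs care in reading the KKT-type necessary condition with the correct sign for a minimization under $\eta\ge0$. Both points rest on the stated hypotheses: the interior-minimizer property, differentiability of $\chi_i^\star$ and $\Gamma_{ik}^{\mathsf S}$, and the first-order condition, which replaces a constraint qualification. A secondary subtlety is in part (i): the lifted boundary need not be closed when $\mathcal D_i(g_i)$ is not closed. The ball-separation argument above avoids needing closedness.
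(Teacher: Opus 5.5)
Your proposal is correct and follows essentially the same argument as the paper. Part (i) uses the ball-separation bound $\Gamma_{ik}^{\mathsf S}(g_i)\ge\alpha_{i,1}(\delta)>0$; the zero value follows because $\chi_i^\star(g_i^\star)$ lies on the lifted boundary; the cone inclusion follows from minimality of $g_i^\star$ together with the assumed first-order condition; and $\bar n_{ik}^{\mathsf S}(g_i^\star)=\nabla\Gamma_{ik}^{\mathsf S}(g_i^\star)$ follows from the vanishing of $\nabla_{g_i}V_i$ on the regulated manifold, where your explicit differentiation of $V_i(\chi_i^\star(g_i),g_i)\equiv0$ with $\nabla_{\chi_i}V_i=0$ just spells out the step the paper states in one line.
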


\begin{proof}
First suppose $\eta_{ik}(g_i)>0$. Since
$h_i(\chi_i^\star(g_i))=g_i$, continuity gives $\delta_{ik}(g_i)>0$ such that
all lifted-boundary points lie at least $\delta_{ik}(g_i)$ away from
$\chi_i^\star(g_i)$. Hence, by \eqref{eq:agent-tracking-lyapunov-bounds},
\[
    \Gamma_{ik}^{\mathsf S}(g_i)
    \ge
    \alpha_{i,1}\big(\delta_{ik}(g_i)\big)
    >
    0 .
\]
Since $V_i(\chi_i^\star(g_i),g_i)=0$, this gives
$\bar{\mathsf M}_{ik}^{\mathsf S}(g_i)>0$.

Now suppose $\eta_{ik}(g_i^\star)=0$. Then
$\chi_i^\star(g_i^\star)$ lies on the lifted safety boundary. Since it is
feasible for the threshold problem and
$V_i(\chi_i^\star(g_i^\star),g_i^\star)=0\le V_i(\chi_i,g_i^\star)$, we obtain
$\Gamma_{ik}^{\mathsf S}(g_i^\star)=0$ and hence
$\bar{\mathsf M}_{ik}^{\mathsf S}(g_i^\star)=0$.

Moreover, $\Gamma_{ik}^{\mathsf S}\ge0$ on $\mathcal G_i$ and vanishes at
$g_i^\star$, so $g_i^\star$ is a local minimizer of the reference-space
problem in the proposition. The assumed first-order necessary condition gives
\begin{equation}
    \nabla_{g_i}\Gamma_{ik}^{\mathsf S}(g_i^\star)
    \in
    \cone
    \left\{
        \nabla\eta_{i\ell}(g_i^\star):
        \ell\in\mathcal A_i^\eta(g_i^\star)
    \right\}.
    \label{eq:gamma-safety-cone}
\end{equation}

Finally, $V_i(\chi_i^\star(g_i),g_i)=0$ near $g_i^\star$ and
$\chi_i^\star(g_i^\star)$ is an interior minimizer, hence
$\nabla_{g_i}V_i(\chi_i^\star(g_i^\star),g_i^\star)=0$. Therefore,
\[
\begin{aligned}
    \bar n_{ik}^{\mathsf S}(g_i^\star)
    &=
    \nabla_{g_i}\mathsf M_{ik}^{\mathsf S}
    \big(\chi_i^\star(g_i^\star),g_i^\star\big)\\
    &=
    \nabla_{g_i}\Gamma_{ik}^{\mathsf S}(g_i^\star)
    -
    \nabla_{g_i}V_i
    \big(\chi_i^\star(g_i^\star),g_i^\star\big)\\
    &=
    \nabla_{g_i}\Gamma_{ik}^{\mathsf S}(g_i^\star).
\end{aligned}
\]
Combining this identity with \eqref{eq:gamma-safety-cone} proves
\eqref{eq:ss-safety-dsm-normal-cone}.
\end{proof}

The proposition verifies DSM compatibility for exact safety DSMs; the
stability-domain DSM is handled by the strict-inactivity condition
\eqref{eq:domain-dsm-strict-inactivity-condition}.

\section{Convergence Analysis via a Lyapunov Small-Gain Argument}
\label{sec:convergence}

This section proves convergence through a small-gain argument between the
tracking layer and the upper safe gradient flow \cite{jiang1994small}. Let
$\chi^\star(g):=\col(\chi_1^\star(g_1),\ldots,\chi_N^\star(g_N))$,
$e_\chi:=\chi-\chi^\star(g)$, and $C(g):=\sum_i c_i(g_i)$. The key step is to
compare the implemented QP with the regulated-manifold QP
\eqref{eq:ideal-regulated-sgf}, showing that tracking error perturbs the QP
solution proportionally to $\|e_\chi\|$. For notation, take the common rate
$\alpha_c=\alpha_\eta=\alpha_{\mathsf S}=\alpha_{\mathsf D}=:\alpha_o>0$;
different rates can be handled by rescaling constraints.

\subsection{Ideal upper-flow value function}
\label{subsec:ideal-upper-value-function}

Let $E$ be full row rank with $Eg=0\Longleftrightarrow L_\otimes^\top g=0$,
and let $A^{\rm id}(g)\rho\ge b^{\rm id}(g)$ collect the regulated inequalities
of \eqref{eq:ideal-regulated-sgf}. Then the regulated flow is
\begin{equation}
\begin{aligned}
    \rho^{\rm id}(g)
    =
    \arg\min_{\rho}\quad&
        \frac12\|\rho+\nabla C(g)\|^2\\
    \mathrm{s.t.}\quad&
        E\rho+\alpha_o Eg=0,\\
    &
        A^{\rm id}(g)\rho\ge b^{\rm id}(g).
\end{aligned}
\label{eq:ideal-upper-compact}
\end{equation}
Following the value-function construction for safe gradient flows, define
\begin{equation}
    \mathcal W_o(g)
    :=
    \inf_{\rho}
    \left\{
        \alpha_o C(g)
        +
        \nabla C(g)^\top\rho
        +
        \frac12\|\rho\|^2
    \right\}
    \label{eq:upper-qp-value-function}
\end{equation}
subject to the same constraints as in \eqref{eq:ideal-upper-compact}. Since
$\alpha_o C(g)$ is independent of $\rho$, the minimizer is still
$\rho^{\rm id}(g)$ and
\begin{equation}
    \mathcal W_o(g)
    =
    \alpha_o C(g)
    +
    \nabla C(g)^\top\rho^{\rm id}(g)
    +
    \frac12\|\rho^{\rm id}(g)\|^2 .
    \label{eq:upper-qp-value-evaluated}
\end{equation}
On the compact set below, shift $\mathcal W_o$ by a constant and denote the
shifted function by $V_o$.

\begin{assumption}[Upper-flow value function]
\label{ass:upper-value}
On the compact reference set under consideration, the value function $V_o$ is
continuously differentiable and there exist constants $a_o,b_o>0$ such that
\begin{equation}
    \nabla V_o(g)^\top\rho^{\rm id}(g)
    \le
    -a_o\|\rho^{\rm id}(g)\|^2,
    \label{eq:upper-value-dissipation}
\end{equation}
and
\begin{equation}
    \|\nabla V_o(g)\|
    \le
    b_o\|\rho^{\rm id}(g)\|.
    \label{eq:upper-value-gradient-bound}
\end{equation}
\end{assumption}

A standard sufficient condition is regularity of the KKT solution of
\eqref{eq:ideal-upper-compact} and $\alpha_o I-Q_o(g)\succeq a_oI$ on the
compact set, where $Q_o(g)$ is the active-constraint Lagrangian Hessian. Then
$\nabla V_o(g)=-(\alpha_o I-Q_o(g))\rho^{\rm id}(g)$, giving
\eqref{eq:upper-value-dissipation}--\eqref{eq:upper-value-gradient-bound}.

\subsection{QP perturbation induced by tracking error}
\label{subsec:qp-perturbation-tracking-error}

We next compare the implemented QP \eqref{eq:dsgf-qp-local} with its
regulated-manifold counterpart \eqref{eq:ideal-upper-compact}. The point is to
show that the measured output and the transient DSM evaluations perturb the
regulated QP only through the tracking error $e_\chi$.

Let $\nabla C_y(\chi):=\col(\nabla c_1(h_1(\chi_1)),\ldots,\nabla c_N(h_N(\chi_N)))$. For this perturbation estimate only, collect the inequality rows of the implemented QP in the affine form $A(\chi,g)\rho\ge b(\chi,g)$, using the same row order as $A^{\rm id}(g)\rho\ge b^{\rm id}(g)$. With this notation, the implemented QP is equivalently written as
\begin{equation}
\begin{aligned}
    \rho^\star(\chi,g)
    =
    \arg\min_{\rho}\quad&
        \frac12\|\rho\|^2+\nabla C_y(\chi)^\top\rho\\
    \mathrm{s.t.}\quad&
        E\rho+\alpha_oEg=0,\\
    &
        A(\chi,g)\rho\ge b(\chi,g).
\end{aligned}
\label{eq:actual-qp-standard-form}
\end{equation}
This is \eqref{eq:dsgf-qp-local} after dropping the constant term
$\frac12\|\nabla C_y(\chi)\|^2$ in the objective. By construction,
$A^{\rm id}(g)\rho\ge b^{\rm id}(g)$ is obtained from
$A(\chi,g)\rho\ge b(\chi,g)$ by setting $\chi=\chi^\star(g)$ and
$\dot\chi=0$.

\begin{lemma}[Perturbation from the regulated QP]
\label{lem:rho-perturbation}
Suppose that, on the compact set under consideration, the maps
$h_i$, $\nabla c_i$, $\mathsf M_{ik}^{\mathsf S}$,
$\mathsf M_i^{\mathsf D}$, and the gradients entering the QP constraints are
locally Lipschitz. Suppose also that $\dot\chi_i=0$ at
$\chi_i=\chi_i^\star(g_i)$ and that $\dot\chi_i$ is locally Lipschitz in
$\chi_i$ uniformly in $g_i$. Finally, assume that the KKT generalized equation
of the regulated QP \eqref{eq:ideal-upper-compact} is uniformly strongly
regular on the compact set. Then there exists $\ell_\rho>0$ such that
\begin{equation}
    \|\rho^\star(\chi,g)-\rho^{\rm id}(g)\|
    \le
    \ell_\rho\|\chi-\chi^\star(g)\|.
    \label{eq:rho-star-rho-id-bound}
\end{equation}
\end{lemma}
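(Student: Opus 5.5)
The plan is to treat the implemented QP \eqref{eq:actual-qp-standard-form} as a parametric perturbation of the regulated QP \eqref{eq:ideal-upper-compact}. The two problems share the Hessian $I$, the equality block $E\rho+\alpha_oEg=0$, and the ordering of the inequality rows. They differ only in three data items: the linear cost term, $\nabla C_y(\chi)$ versus $\nabla C(g)$; the inequality matrix, $A(\chi,g)$ versus $A^{\rm id}(g)$; and the right-hand side, $b(\chi,g)$ versus $b^{\rm id}(g)$. Since the regulated QP is assumed uniformly strongly regular, Robinson's strong-regularity theorem (equivalently, the Lipschitz stability of solutions of strongly regular generalized equations) applies. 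It says that the KKT solution $(\rho,\mu,\lambda)$ of the perturbed problem exists, is unique near the nominal one, and is Lipschitz in the data perturbation. The estimate \eqref{eq:rho-star-rho-id-bound} will therefore follow from two steps: bounding the data perturbation by $\|e_\chi\|$, and applying strong regularity with a uniform modulus.

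\textbf{Step 1: the cost term.} Blockwise we have $\|\nabla c_i(h_i(\chi_i))-\nabla c_i(g_i)\|\le\vartheta_i\|h_i(\chi_i)-h_i(\chi_i^\star(g_i))\|$, using $h_i(\chi_i^\star(g_i))=g_i$. The Lipschitz continuity of $h_i$ then gives $\|\nabla C_y(\chi)-\nabla C(g)\|\le \ell_c\|e_\chi\|$.

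\textbf{Step 2: the constraint rows.} The reference-CBF rows $\nabla\eta_{ik}(g_i)^\top\rho_i+\alpha_o\eta_{ik}(g_i)\ge0$ do not depend on $\chi$, so they coincide in both QPs. For a DSM row, say the safety row, the $\rho$-coefficient is $\nabla_{g_i}\mathsf M^{\mathsf S}_{ik}(\chi_i,g_i)$ versus $\bar n^{\mathsf S}_{ik}(g_i)=\nabla_{g_i}\mathsf M^{\mathsf S}_{ik}(\chi_i^\star(g_i),g_i)$, and the gradient is locally Lipschitz, so this difference is $O(\|e_{\chi_i}\|)$. The constant term is
\[
\nabla_{\chi_i}\mathsf M^{\mathsf S}_{ik}(\chi_i,g_i)^\top\dot\chi_i+\alpha_o\mathsf M^{\mathsf S}_{ik}(\chi_i,g_i)
\quad\text{versus}\quad
\alpha_o\bar{\mathsf M}^{\mathsf S}_{ik}(g_i).
\]
The $\alpha_o\mathsf M$ part differs by $O(\|e_{\chi_i}\|)$ by Lipschitzness of $\mathsf M^{\mathsf S}_{ik}$. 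For the drift part, the hypotheses give $\dot\chi_i=0$ at $\chi_i^\star(g_i)$ together with a Lipschitz constant in $\chi_i$ that is uniform in $g_i$, so $\|\dot\chi_i\|\le\ell_F\|e_{\chi_i}\|$. The gradient $\nabla_{\chi_i}\mathsf M$ is bounded on the compact set, so the drift part is $O(\|e_{\chi_i}\|)$ as well. The stability-domain rows are handled identically. Stacking all rows gives
\[
\|A(\chi,g)-A^{\rm id}(g)\|+\|b(\chi,g)-b^{\rm id}(g)\|\le \ell_A\|e_\chi\|.
\]

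\textbf{Step 3: applying strong regularity uniformly.} I expect this to be the main obstacle. Robinson's theorem is local: it gives a modulus $\kappa(g)$ and a neighborhood radius $\delta(g)$ in data space around each nominal point. To make the estimate global, I would first use compactness of the set and the assumed uniformity of strong regularity to extract a common modulus $\kappa$ and radius $\delta$. A standard finite-subcover argument, together with continuity of the regulated data in $g$, should give this. When $\ell_A\|e_\chi\|+\ell_c\|e_\chi\|\le\delta$, the Lipschitz estimate gives
\[
\|\rho^\star-\rho^{\rm id}\|\le\kappa(\ell_c+\ell_A)\|e_\chi\|.
\]
One must also check that the localized solution furnished by the theorem is the actual global QP minimizer; this holds because the QP is strongly convex, so its KKT point is unique. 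For $\|e_\chi\|>\delta/(\ell_c+\ell_A)$, I would instead use boundedness: both $\rho^\star$ and $\rho^{\rm id}$ are bounded on the compact set, since feasibility holds there and the objective is strongly convex with bounded data. Hence $\|\rho^\star-\rho^{\rm id}\|\le 2\bar\rho\le (2\bar\rho(\ell_c+\ell_A)/\delta)\|e_\chi\|$. Taking $\ell_\rho$ as the maximum of the two constants completes the argument.
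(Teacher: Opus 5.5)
Your proposal is correct and follows the paper's proof. It uses the same three data-perturbation bounds: the cost vector via Lipschitz $\nabla c_i$, $h_i$ and $h_i(\chi_i^\star(g_i))=g_i$; the DSM row coefficients via Lipschitz gradients; and the right-hand sides via $\dot\chi_i=0$ on the regulated manifold. It then applies Lipschitz stability under uniform strong regularity. Your Step~3 is more careful than the paper, which asserts uniform Lipschitz dependence without the local-to-global and KKT-uniqueness checks.

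One step in that extra argument needs a different justification. Boundedness of $\rho^\star$ for large $\|e_\chi\|$ does not follow from feasibility, strong convexity and bounded data: for example, $\min_\rho\frac12\rho^2$ s.t.\ $x^2\rho\ge x$, $x\in[0,1]$, gives $\rho^\star=1/x$. Take it instead from continuity of $\rho^\star$ on the compact set, e.g.\ the local Lipschitzness of the closed-loop vector field invoked in the proof of Theorem~\ref{thm:small-gain-convergence}. The paper's one-line appeal to strong regularity also tacitly needs this ingredient.
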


\begin{proof}
The equality constraint is identical in \eqref{eq:actual-qp-standard-form} and
\eqref{eq:ideal-upper-compact}. By Lipschitz continuity of $h_i$ and
$\nabla c_i$, and using $h(\chi^\star(g))=g$, there exists $L_c>0$ such that
\begin{equation}
    \|\nabla C_y(\chi)-\nabla C(g)\|
    \le L_c\|\chi-\chi^\star(g)\| .
\label{eq:objective-vector-perturbation}
\end{equation}
The steady-state CBF rows depend only on $g_i$ and are unchanged. Local
Lipschitzness of the DSM gradients gives, after stacking all DSM rows,
\begin{equation}
    \|A(\chi,g)-A^{\rm id}(g)\|
    \le L_A\|\chi-\chi^\star(g)\| .
    \label{eq:A-data-bound}
\end{equation}
For the right-hand side, the implemented DSM-CBF rows contain
$-\nabla_{\chi_i}\mathsf M(\chi_i,g_i)^\top\dot\chi_i-
\alpha_o\mathsf M(\chi_i,g_i)$, while the regulated rows contain
$-\alpha_o\bar{\mathsf M}(g_i)$. Since $\dot\chi_i=0$ at
$\chi_i=\chi_i^\star(g_i)$ and the involved maps are locally Lipschitz and
bounded on the compact set, stacking all rows gives
\begin{equation}
    \|b(\chi,g)-b^{\rm id}(g)\|
    \le L_b\|\chi-\chi^\star(g)\| .
    \label{eq:b-data-bound}
\end{equation}
Consequently, for some $L_\Delta>0$,
\begin{equation}
\begin{aligned}
    &\|\nabla C_y(\chi)-\nabla C(g)\|
    +\|A(\chi,g)-A^{\rm id}(g)\|
    +\|b(\chi,g)-b^{\rm id}(g)\| \\
    &\hspace{7em}\le L_\Delta\|\chi-\chi^\star(g)\| .
\end{aligned}
\label{eq:qp-coefficient-perturbation-bound}
\end{equation}
Uniform strong regularity of the regulated QP KKT generalized equation implies
Lipschitz dependence of the primal solution on these data, uniformly over the
compact set. Hence
$\|\rho^\star(\chi,g)-\rho^{\rm id}(g)\|\le
L_{\rm K}L_\Delta\|\chi-\chi^\star(g)\|$, so
\eqref{eq:rho-star-rho-id-bound} holds with
$\ell_\rho:=L_{\rm K}L_\Delta$.
\end{proof}

Thus the implemented safe-gradient velocity differs from the regulated one
only by a tracking-error perturbation.

\subsection{Tracking estimation under moving references}
\label{subsec:tracking-iss-estimate}

We now derive an input-to-state estimation for the tracking layer. For each
fixed $g_i\in\mathcal G_i$, the Lyapunov function $V_i(\chi_i,g_i)$ satisfies
\begin{equation}
    \nabla_{\chi_i}V_i(\chi_i,g_i)^\top\mathcal F_i^{\rm cl}(\chi_i,g_i)
    \le
    -W_i(\chi_i,g_i).
    \label{eq:frozen-reference-lyapunov-recall}
\end{equation}
When $\dot g_i=\rho_i^\star$, the additional term
$\nabla_{g_i}V_i(\chi_i,g_i)^\top\rho_i^\star$ enters as an input.

On the compact set under consideration, assume there exist
$\underline w_i,\bar v_i>0$ such that
\begin{equation}
    W_i(\chi_i,g_i)
    \ge
    \underline w_i
    \|\chi_i-\chi_i^\star(g_i)\|^2,
    \label{eq:tracking-W-quadratic-bound}
\end{equation}
and
\begin{equation}
    \|\nabla_{g_i}V_i(\chi_i,g_i)\|
    \le
    \bar v_i
    \|\chi_i-\chi_i^\star(g_i)\|.
    \label{eq:tracking-g-gradient-bound}
\end{equation}
The second bound follows locally from smoothness when the Lyapunov functions
are centered at interior minimizers $\chi_i^\star(g_i)$.

Let
$
    V_x(\chi,g):=\sum_{i=1}^N V_i(\chi_i,g_i),
    e_\chi:=\chi-\chi^\star(g).
$
Combining the preceding bounds and applying Young's inequality, for any
$\varepsilon_x\in(0,1)$,
\begin{equation}
    \dot V_x
    \le
    -a_x\|e_\chi\|^2
    +
    b_x\|\rho^\star\|^2,
    \label{eq:tracking-iss}
\end{equation}
where one may take
$
    a_x:=(1-\varepsilon_x)\min_{i\in\mathcal N}\underline w_i,
    b_x:=\max_{i\in\mathcal N}
    \frac{\bar v_i^2}{4\varepsilon_x\underline w_i}.
$
This estimate can be rewritten in terms of the ideal regulated velocity
$\rho^{\rm id}(g)$.

\begin{lemma}[Tracking estimation relative to the ideal flow]
\label{lem:tracking-estimate-ideal}
Suppose the bounds \eqref{eq:tracking-W-quadratic-bound} and
\eqref{eq:tracking-g-gradient-bound} hold on the compact set under
consideration, and suppose Lemma~\ref{lem:rho-perturbation} holds. Define
$\bar a_x:=a_x-2b_x\ell_\rho^2$ and $\bar b_x:=2b_x$. If
$\bar a_x>0$, then
\begin{equation}
    \dot V_x
    \le
    -\bar a_x\|e_\chi\|^2
    +
    \bar b_x\|\rho^{\rm id}(g)\|^2 .
    \label{eq:tracking-ideal-estimate}
\end{equation}
\end{lemma}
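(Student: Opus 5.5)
The plan is to start from the tracking estimate \eqref{eq:tracking-iss}, $\dot V_x\le -a_x\|e_\chi\|^2+b_x\|\rho^\star\|^2$, and replace $\|\rho^\star\|^2$ by a bound in terms of $\|\rho^{\rm id}(g)\|$ and $\|e_\chi\|$ using Lemma~\ref{lem:rho-perturbation}. Since \eqref{eq:tracking-iss} is only stated in the text as following from Young's inequality, I would first briefly justify it. Differentiating along $\dot\chi_i=\mathcal F_i^{\rm cl}(\chi_i,g_i)$, $\dot g_i=\rho_i^\star$ gives
\[
\dot V_i=\nabla_{\chi_i}V_i^\top\mathcal F_i^{\rm cl}+\nabla_{g_i}V_i^\top\rho_i^\star
\le -\underline w_i\|\tilde\chi_i\|^2+\bar v_i\|\tilde\chi_i\|\|\rho_i^\star\|,
\]
by \eqref{eq:frozen-reference-lyapunov-recall}, \eqref{eq:tracking-W-quadratic-bound} and \eqref{eq:tracking-g-gradient-bound}. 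Young's inequality gives $\bar v_i\|\tilde\chi_i\|\|\rho_i^\star\|\le\varepsilon_x\underline w_i\|\tilde\chi_i\|^2+\frac{\bar v_i^2}{4\varepsilon_x\underline w_i}\|\rho_i^\star\|^2$. Summing over $i$, and using $\sum_i\|\tilde\chi_i\|^2=\|e_\chi\|^2$ and $\sum_i\|\rho_i^\star\|^2=\|\rho^\star\|^2$, yields \eqref{eq:tracking-iss} with the stated $a_x,b_x$.

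Next, I write $\rho^\star=\rho^{\rm id}(g)+(\rho^\star-\rho^{\rm id}(g))$ and apply $\|u+v\|^2\le2\|u\|^2+2\|v\|^2$. Together with \eqref{eq:rho-star-rho-id-bound} this gives
\[
\|\rho^\star\|^2\le 2\|\rho^{\rm id}(g)\|^2+2\ell_\rho^2\|e_\chi\|^2 .
\]
Substituting into \eqref{eq:tracking-iss} gives
\[
\dot V_x\le-(a_x-2b_x\ell_\rho^2)\|e_\chi\|^2+2b_x\|\rho^{\rm id}(g)\|^2,
\]
which is \eqref{eq:tracking-ideal-estimate} with $\bar a_x$ and $\bar b_x$ as defined. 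The hypothesis $\bar a_x>0$ is only needed so that this is a genuine dissipation estimate; the inequality itself holds regardless.

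The main point needing care is not the algebra but checking that every bound is applied on the same compact set. The trajectory must remain where \eqref{eq:tracking-W-quadratic-bound}, \eqref{eq:tracking-g-gradient-bound} and the strong-regularity hypothesis of Lemma~\ref{lem:rho-perturbation} hold. The derivative computation also needs $g_i\in\mathcal G_i$ and $\chi_i\in\mathcal D_i(g_i)$, so that $V_i$ is defined and differentiable. I would take both from the forward-invariance consequences of the CBF constraints \eqref{eq:ss-cbf-local} and \eqref{eq:stability-dsm-cbf-local}: the reference constraint keeps $g_i$ in $\mathcal G_i$, and the stability-domain DSM keeps $\chi_i$ strictly inside the Lyapunov domain. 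I would state these as standing assumptions on the compact set, as the lemma does.
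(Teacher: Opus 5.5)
Your proposal is correct and follows the paper's proof. Like the paper, you split $\rho^\star=\rho^{\rm id}(g)+(\rho^\star-\rho^{\rm id}(g))$, use Lemma~\ref{lem:rho-perturbation} to get $\|\rho^\star\|^2\le 2\|\rho^{\rm id}(g)\|^2+2\ell_\rho^2\|e_\chi\|^2$, and substitute into \eqref{eq:tracking-iss}; your explicit Young's-inequality derivation of \eqref{eq:tracking-iss} with the stated $a_x,b_x$ correctly fills in a step the paper only sketches.
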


\begin{proof}
By Lemma~\ref{lem:rho-perturbation},
$
    \rho^\star
    =
    \rho^{\rm id}
    +
    \big(\rho^\star-\rho^{\rm id}\big),
    \|\rho^\star-\rho^{\rm id}\|
    \le
    \ell_\rho\|e_\chi\|.
$
Thus,
\begin{equation}
    \|\rho^\star\|^2
    \le
    2\|\rho^{\rm id}\|^2
    +
    2\ell_\rho^2\|e_\chi\|^2 .
    \label{eq:rho-star-bound-by-rho-id}
\end{equation}
Substituting \eqref{eq:rho-star-bound-by-rho-id} into
\eqref{eq:tracking-iss} gives
\[
    \dot V_x
    \le
    -
    \big(a_x-2b_x\ell_\rho^2\big)\|e_\chi\|^2
    +
    2b_x\|\rho^{\rm id}(g)\|^2,
\]
which is exactly \eqref{eq:tracking-ideal-estimate}.
\end{proof}

\subsection{Small-gain convergence and optimality recovery}
\label{subsec:small-gain-convergence}

We now combine the upper-flow estimate and the tracking estimate. Along the
implemented closed-loop system, $\dot g=\rho^\star(\chi,g)$. Hence,
\begin{equation}
\begin{aligned}
    \dot V_o
    &=
    \nabla V_o(g)^\top\rho^\star\\
    &=
    \nabla V_o(g)^\top\rho^{\rm id}
    +
    \nabla V_o(g)^\top
    \big(\rho^\star-\rho^{\rm id}\big).
\end{aligned}
\label{eq:upper-value-actual-derivative}
\end{equation}
Using Assumption~\ref{ass:upper-value} and
Lemma~\ref{lem:rho-perturbation}, we obtain
\begin{equation}
    \dot V_o
    \le
    -a_o\|\rho^{\rm id}\|^2
    +
    b_o\ell_\rho
    \|\rho^{\rm id}\|\|e_\chi\|.
    \label{eq:upper-flow-before-young}
\end{equation}
For any $\varepsilon_o\in(0,1)$, Young's inequality gives
\begin{equation}
    \dot V_o
    \le
    -(1-\varepsilon_o)a_o\|\rho^{\rm id}\|^2
    +
    d_o\|e_\chi\|^2,
    \label{eq:upper-flow-iss}
\end{equation}
where
$
    d_o:=
    \frac{b_o^2\ell_\rho^2}{4\varepsilon_o a_o}.
$

Combining \eqref{eq:upper-flow-iss} with \eqref{eq:tracking-ideal-estimate}
for $\mathcal V(\chi,g):=V_x(\chi,g)+\gamma V_o(g)$, $\gamma>0$, gives
\begin{equation}
\begin{aligned}
    \dot{\mathcal V}
    \le&
    -
    \big(\bar a_x-\gamma d_o\big)\|e_\chi\|^2\\
    &-
    \big(\gamma(1-\varepsilon_o)a_o-\bar b_x\big)
    \|\rho^{\rm id}(g)\|^2 .
\end{aligned}
\label{eq:composite-lyap-derivative}
\end{equation}
Thus $\mathcal V$ decreases if there exists $\gamma>0$ such that
\begin{equation}
    \frac{\bar b_x}{(1-\varepsilon_o)a_o}
    <
    \gamma
    <
    \frac{\bar a_x}{d_o}.
    \label{eq:small-gain-gamma-interval}
\end{equation}
Equivalently,
\begin{equation}
    \bar b_xd_o
    <
    (1-\varepsilon_o)a_o\bar a_x .
    \label{eq:small-gain-condition}
\end{equation}
This is the small-gain condition between the two layers.

\begin{theorem}[Small-gain convergence and optimality recovery]
\label{thm:small-gain-convergence}
Suppose Assumptions~\ref{ass:cost} and \ref{ass:graph} hold. Suppose also that the hypotheses
of Lemma~\ref{lem:rho-perturbation} and
Lemma~\ref{lem:tracking-estimate-ideal} hold on a compact forward-invariant set
containing the closed-loop trajectory. Let $\varepsilon_o\in(0,1)$ be fixed,
and assume that $\bar a_x=a_x-2b_x\ell_\rho^2>0$ and that the small-gain
condition \eqref{eq:small-gain-condition} holds. If the initial condition
satisfies
$g_i(0)\in\mathcal G_i$,
$\mathsf M_{ik}^{\mathsf S}(\chi_i(0),g_i(0))\ge0$, and
$\mathsf M_i^{\mathsf D}(\chi_i(0),g_i(0))\ge0$ for all relevant $i$ and $k$,
then the closed-loop trajectory satisfies
\begin{equation}
    \eta_{ik}(y_i(t))\ge0,
    \quad
    \forall t\ge0,\ i\in\mathcal N,\ k\in\mathcal K_i,
    \label{eq:closed-loop-output-safety}
\end{equation}
and
\begin{equation}
    e_\chi(t)\to0,
    \quad
    \rho^{\rm id}(g(t))\to0,
    \quad
    \rho^\star(\chi(t),g(t))\to0 .
    \label{eq:convergence-e-rho}
\end{equation}
Moreover,
\begin{equation}
    y_i(t)-g_i(t)\to0,
    \qquad
    i\in\mathcal N,
    \label{eq:output-reference-tracking-convergence}
\end{equation}
and
\begin{equation}
    L_\otimes^\top g(t)\to0 .
    \label{eq:consensus-convergence}
\end{equation}
Every limit point $(\bar\chi,\bar g)$ of the closed-loop trajectory satisfies
\begin{equation}
    \bar\chi_i=\chi_i^\star(\bar g_i),
    \quad
    \rho^{\rm id}(\bar g)=0,
    \quad
    L_\otimes^\top\bar g=0 .
    \label{eq:limit-point-ideal-equilibrium}
\end{equation}
If, in addition, $\bar g$ satisfies the DSM-compatibility condition
\eqref{eq:safety-dsm-normal-compatibility} and the regulated
stability-domain strict-inactivity condition
\eqref{eq:domain-dsm-strict-inactivity-condition}, then $\bar g$ satisfies the
KKT conditions \eqref{eq:original-kkt} of
\eqref{eq:static-distributed-optimization}. If all limit points satisfy these
conditions and the feasible set $\mathcal S$ is convex, then the static
problem has the unique optimal solution $r^\star$, and
\begin{equation}
    \lim_{t\to\infty}g_i(t)=r^\star,
    \quad
    \lim_{t\to\infty}y_i(t)=r^\star,
    \quad
    i\in\mathcal N .
    \label{eq:final-output-convergence}
\end{equation}
\end{theorem}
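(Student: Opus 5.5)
The argument has three stages: forward invariance, Lyapunov decrease with Barbalat-type convergence, and identification of limit points via Theorem~\ref{thm:ideal-equilibrium-kkt}.

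\textbf{Safety.} By hypothesis the QP \eqref{eq:dsgf-qp-local} is feasible along the trajectory on the compact forward-invariant set, with a locally Lipschitz solution. Each CBF row therefore enforces $\dot b\ge-\alpha_o b$ for $b\in\{\eta_{ik}(g_i),\mathsf M_{ik}^{\mathsf S},\mathsf M_i^{\mathsf D}\}$, and the comparison lemma keeps $b(t)\ge0$ whenever $b(0)\ge0$. Now suppose $\eta_{ik}(y_i(t_1))<0$ at some $t_1$. By continuity the output would cross $\eta_{ik}=0$, so $\chi_i(t)$ would lie on the lifted boundary $\partial\mathcal X_{ik}^{\mathsf S}(g_i(t))$ at some $t$. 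Nonnegativity of $\mathsf M_i^{\mathsf D}$ keeps $\chi_i$ strictly inside $\mathcal D_i(g_i)$, so \eqref{eq:agent-safety-energy-threshold} applies and gives $V_i\ge\Gamma_{ik}^{\mathsf S}$ at that instant. The boundary case $\mathsf M^{\mathsf S}_{ik}=0$ is not excluded by this. I would handle it by noting that $\eta_{ik}(y_i)$ is then at most $0$, which still satisfies \eqref{eq:closed-loop-output-safety} since the claim is only non-strict. To rule out actually entering $\{\eta_{ik}<0\}$, I would use a first-exit-time argument: before any exit the state must pass through the boundary, where $V_i\ge\Gamma$, and the DSM prevents $V_i$ from exceeding $\Gamma$. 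This gives \eqref{eq:closed-loop-output-safety}.

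\textbf{Convergence.} Fix $\gamma$ in the interval \eqref{eq:small-gain-gamma-interval}, which is nonempty by \eqref{eq:small-gain-condition}. Combining Lemma~\ref{lem:tracking-estimate-ideal} with \eqref{eq:upper-flow-iss} yields \eqref{eq:composite-lyap-derivative}:
\[
\dot{\mathcal V}\le -c\big(\|e_\chi\|^2+\|\rho^{\rm id}(g)\|^2\big), \qquad c>0.
\]
$\mathcal V$ is bounded below on the compact set, since $V_x\ge0$ and $V_o$ is continuous. Integrating therefore shows that $\|e_\chi\|^2+\|\rho^{\rm id}\|^2$ is integrable on $[0,\infty)$. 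The closed-loop vector field is Lipschitz and the trajectory stays in the compact set, so $e_\chi(t)$ and $\rho^{\rm id}(g(t))$ are uniformly continuous in $t$. Here $\rho^{\rm id}$ is Lipschitz in $g$ by the strong regularity assumed in Lemma~\ref{lem:rho-perturbation}. Barbalat's lemma then gives $e_\chi\to0$ and $\rho^{\rm id}\to0$, and Lemma~\ref{lem:rho-perturbation} gives $\rho^\star\to0$. Next, $y_i-g_i=h_i(\chi_i)-h_i(\chi_i^\star(g_i))\to0$ by Lipschitzness of $h_i$. The equality constraint gives $\frac{d}{dt}(L_\otimes^\top g)=-\alpha_o L_\otimes^\top g$, so $L_\otimes^\top g\to0$ exponentially.

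\textbf{Limit points and optimality.} Limit points exist by compactness. Continuity of $\chi^\star$, $\rho^{\rm id}$ and $L_\otimes^\top$ passes the three convergences to the limit, giving \eqref{eq:limit-point-ideal-equilibrium}. The QP is regular by the strong-regularity hypothesis, so under DSM compatibility Theorem~\ref{thm:ideal-equilibrium-kkt} shows that $\bar g$ is a KKT point of \eqref{eq:static-distributed-optimization}. When $\mathcal S$ is convex and $C$ is strongly convex (Assumption~\ref{ass:cost}), KKT points coincide with the unique minimizer $r^\star$. Every limit point therefore equals $\mathbf 1_N\otimes r^\star$, and a bounded trajectory whose only limit point is $\mathbf 1_N\otimes r^\star$ converges to it; hence $g_i\to r^\star$. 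Together with $y_i-g_i\to0$ this gives \eqref{eq:final-output-convergence}.

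\textbf{Main obstacle.} I expect the delicate step to be the transient-safety implication $\mathsf M^{\mathsf S}_{ik}\ge0\Rightarrow\eta_{ik}(y_i)\ge0$ along moving references. The thresholds $\Gamma(g_i(t))$ change with time, so the argument has to be a pointwise-in-time statement, namely that a state on the lifted boundary has $V_i\ge\Gamma$. It also has to respect domain validity, which is exactly what $\mathsf M^{\mathsf D}_i$ provides. A secondary subtlety is that Barbalat needs uniform continuity of $\rho^{\rm id}(g(t))$, which I would take from the Lipschitz QP solution map.
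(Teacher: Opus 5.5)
\textbf{The convergence and optimality stages are fine.} They coincide with the paper's proof: the composite function $\mathcal V=V_x+\gamma V_o$ with $\gamma$ in \eqref{eq:small-gain-gamma-interval}, square-integrability plus uniform continuity and Barbalat's lemma, Lemma~\ref{lem:rho-perturbation} for $\rho^\star\to0$, the equality constraint for $L_\otimes^\top g\to0$, and Theorem~\ref{thm:ideal-equilibrium-kkt} plus convexity at the limit points.

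\textbf{The safety step has a gap, in exactly the case you flag.} The paper settles safety in one line (forward invariance of the DSM sets plus the certificate of Section~\ref{subsec:safety-dsm}). Your explicit version does not close. Let $t_0$ be the last instant before an excursion, so $\eta_{ik}(y_i(t_0))=0$. The threshold definition gives $V_i\ge\Gamma_{ik}^{\mathsf S}$ there, and the DSM gives $V_i\le\Gamma_{ik}^{\mathsf S}$. Together these only say $\mathsf M_{ik}^{\mathsf S}(t_0)=0$, which is no contradiction. For $t>t_0$ the state is off the lifted boundary, so the boundary inequality gives nothing. Your remark that $\eta_{ik}(y_i)$ is ``at most $0$'' in the zero-margin case does not yield nonnegativity either. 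The first-exit argument also presupposes $\eta_{ik}(y_i(0))\ge0$, which is not among the hypotheses.

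\textbf{The missing ingredient is the quantitative comparison bound.} From $\dot{\mathsf M}_{ik}^{\mathsf S}\ge-\alpha_o\mathsf M_{ik}^{\mathsf S}$ you get $\mathsf M_{ik}^{\mathsf S}(t)\ge e^{-\alpha_o t}\mathsf M_{ik}^{\mathsf S}(0)$. So a strictly positive initial margin stays strictly positive on every finite horizon. Your pointwise observation (boundary states have $V_i\ge\Gamma_{ik}^{\mathsf S}$, and $\mathsf M_i^{\mathsf D}\ge0$ keeps $\chi_i$ in $\mathcal D_i(g_i)$) then shows that $t\mapsto\eta_{ik}(y_i(t))$ never vanishes. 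Hence it keeps the sign of $\eta_{ik}(y_i(0))$.

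\textbf{The initial sign follows from the frozen-reference argument of Section~\ref{subsec:safety-dsm} when $\mathsf M_{ik}^{\mathsf S}(0)>0$.} In that case $\Gamma_{ik}^{\mathsf S}(g_i(0))>0$, which forces $\eta_{ik}(g_i(0))>0$; otherwise $\chi_i^\star(g_i(0))$ would lie on the lifted boundary and give $\Gamma_{ik}^{\mathsf S}=0$. The frozen flow from $\chi_i(0)$ is a path inside a sublevel set that misses the lifted boundary and converges to $\chi_i^\star(g_i(0))$. Along this path $\eta_{ik}\circ h_i$ never vanishes, so $\eta_{ik}(y_i(0))>0$.

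\textbf{The case $\mathsf M_{ik}^{\mathsf S}(0)=0$ remains open.} The statement allows it, but the argument above does not cover it. The paper's appeal does not really cover it either, since its certificate is stated only for $\mathsf M_{ik}^{\mathsf S}>0$. This case needs either a strict initial margin or extra structure, for example connected sublevel sets with $\{V_i\le c\}=\overline{\{V_i<c\}}$.
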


\begin{proof}
The safety claim follows from the CBF comparison argument in
Section~\ref{sec:dsgf}: the reference set and DSM sets are forward invariant,
so the transient output-safety certificates remain valid.

By the small-gain condition \eqref{eq:small-gain-condition}, one can choose
$\gamma$ satisfying \eqref{eq:small-gain-gamma-interval}. Then the composite
Lyapunov estimate \eqref{eq:composite-lyap-derivative} yields constants
$c_x,c_o>0$ such that
$\dot{\mathcal V}\le -c_x\|e_\chi\|^2-c_o\|\rho^{\rm id}(g)\|^2$.
Since the trajectory remains in a compact set and $\mathcal V$ is bounded from
below, it follows that $e_\chi\in L_2$ and $\rho^{\rm id}(g)\in L_2$. The
closed-loop vector field is locally Lipschitz on the compact set, so
$e_\chi(t)$ and $\rho^{\rm id}(g(t))$ are uniformly continuous. Barbalat's
lemma gives $e_\chi(t)\to0$ and $\rho^{\rm id}(g(t))\to0$. Lemma~\ref{lem:rho-perturbation}
then gives $\rho^\star(\chi(t),g(t))\to0$, proving
\eqref{eq:convergence-e-rho}.

Since $h_i$ is locally Lipschitz and $h_i(\chi_i^\star(g_i))=g_i$,
$e_\chi(t)\to0$ implies $y_i(t)-g_i(t)\to0$. Also,
$L_\otimes^\top\rho^\star+\alpha_cL_\otimes^\top g=0$ and
$\dot g=\rho^\star$ give $L_\otimes^\top g(t)\to0$.

For any limit point $(\bar\chi,\bar g)$, the preceding limits imply
\eqref{eq:limit-point-ideal-equilibrium}. Theorem~\ref{thm:ideal-equilibrium-kkt}
then gives the KKT conditions whenever the DSM-compatibility and strict-inactivity
conditions hold. If all limit points satisfy them and $\mathcal S$ is convex,
Assumption~\ref{ass:cost} gives the unique optimal solution $r^\star$, so
$g_i(t)\to r^\star$ and, because $y_i(t)-g_i(t)\to0$, also
$y_i(t)\to r^\star$.
\end{proof}

\section{Adaptive tangential objective shaping for nonconvex safety constraints}

We discuss a preliminary extension to nonconvex reference-space safety
constraints. The implemented reference governor still contains the
equality-flow constraint. Hence, at any equilibrium of the reference dynamics,
\(\rho^\star=0\) implies \(L_\otimes^\top g=0\), and therefore
\(g_i=r\) for all \(i\in\mathcal N\). Thus, the local equilibrium analysis
below is carried out on the agreement coordinate \(r\), where the objective is
\(C(r):=\sum_{i=1}^N c_i(r)\).

Fix an agent \(i\in\mathcal N\), and let \(\eta_{i0}(r)\ge0\) denote a
nonconvex reference-space safety constraint associated with agent \(i\). Assume
that \(\eta_{i0}\) is \(C^2\), \(\nabla\eta_{i0}(r)\neq0\) on
\(\eta_{i0}(r)=0\), and the desired optimal solution is strictly feasible, i.e.,
\(\eta_{i0}(r^\star)\ge\delta_{i\star}>0\) for some
\(\delta_{i\star}>0\). We focus on a boundary equilibrium \(r_e\neq r^\star\)
where \(\eta_{i0}(r_e)=0\), all other convex reference-space constraints are
inactive, the stability-domain DSMs are strictly inactive, and active safety
DSMs are compatible with the active reference-safety normal as in
Section~5. Then the reduced KKT condition is
\begin{equation}
\label{eq:nonconvex-reduced-kkt}
    \nabla C(r_e)-\lambda_{i,e}\nabla\eta_{i0}(r_e)=0,
    \quad
    \lambda_{i,e}\ge0,
    \quad
    \eta_{i0}(r_e)=0 .
\end{equation}
This condition describes a nonconvex-induced undesirable equilibrium: the
aggregate descent direction is cancelled by the CBF-induced boundary normal.

Let \(T_{i,e}\in\mathbb R^{p\times(p-1)}\) be an orthonormal basis of
\(\mathcal T_{i,e}:=\{v\in\mathbb R^p:\nabla\eta_{i0}(r_e)^\top v=0\}\).
Define the reduced Hessian
\begin{equation}
\label{eq:nonconvex-reduced-hessian}
    H_{i,e}
    :=
    T_{i,e}^\top
    \left(
        \nabla^2 C(r_e)
        -
        \lambda_{i,e}\nabla^2\eta_{i0}(r_e)
    \right)
    T_{i,e}.
\end{equation}
If \(H_{i,e}\succ0\), the point is a locally attracting constrained minimum of
the reduced safe gradient flow. If \(\lambda_{\min}(H_{i,e})<0\), then there
exists a feasible tangential direction along which the objective can decrease,
and the point is saddle-type \cite{REIS2026113032}.

To destabilize attracting boundary equilibria, introduce an adaptive shaping
state \(\vartheta_i=(\sigma_i,\xi_i,Q_i)\), where \(\sigma_i\ge0\),
\(\xi_i\in\mathbb R^p\), and \(Q_i=Q_i^\top\succeq0\). For
\(\nabla\eta_{i0}(\xi_i)\neq0\), define
\begin{equation}
\label{eq:nonconvex-tangent-projector}
    P_{i,\mathcal T}(\xi_i)
    :=
    I-
    \frac{
        \nabla\eta_{i0}(\xi_i)\nabla\eta_{i0}(\xi_i)^\top
    }{
        \|\nabla\eta_{i0}(\xi_i)\|^2
    } .
\end{equation}
Let \(\chi_i^b(r,\xi_i)\) be a \(C^2\) bump function satisfying
\(0\le\chi_i^b\le1\), \(\chi_i^b(\xi_i,\xi_i)=1\), and
\(\nabla_r\chi_i^b(\xi_i,\xi_i)=0\). Define
\begin{equation}
\label{eq:nonconvex-shaping}
    \Phi_i(r,\vartheta_i)
    :=
    -
    \frac{\sigma_i}{2}
    \chi_i^b(r,\xi_i)
    (r-\xi_i)^\top
    P_{i,\mathcal T}(\xi_i)^\top
    Q_i
    P_{i,\mathcal T}(\xi_i)
    (r-\xi_i).
\end{equation}
In implementation, agent \(i\) replaces its nominal gradient
\(\nabla c_i(y_i)\) in the QP objective by
\(\nabla c_i(y_i)+\nabla_{g_i}\Phi_i(g_i,\vartheta_i)\), while all
agreement-flow, reference-CBF, safety-DSM, and stability-domain DSM
constraints are kept unchanged. On the agreement manifold, this corresponds to
the shaped aggregate objective \(C_i^s(r,\vartheta_i):=C(r)+\Phi_i(r,\vartheta_i)\).

The shaping state evolves continuously as
\begin{equation}
\label{eq:nonconvex-shaping-dynamics}
    \dot\sigma_i
    =
    -k_{\sigma i}(\sigma_i-\bar\sigma_i a_i(g_i)),
    \quad
    \dot\xi_i
    =
    k_{\xi i}a_i(g_i)(g_i-\xi_i),
\end{equation}
where \(k_{\sigma i},k_{\xi i},\bar\sigma_i>0\). The activation
\(a_i(g_i)\in[0,1]\) is smooth, large near \(\eta_{i0}(g_i)=0\) when the
descent direction is nearly aligned with the boundary normal, and zero when
\(\eta_{i0}(g_i)\ge h_{i,\rm off}\), with
\(0<h_{i,\rm off}<\eta_{i0}(r^\star)\). Thus, the shaping term vanishes near the desired optimal solution and is activated through smooth dynamics rather than discontinuous switching.

\begin{proposition}
\label{prop:nonconvex-tangential-shaping}
Let \(r_e\) satisfy \eqref{eq:nonconvex-reduced-kkt} and suppose
\(H_{i,e}\succ0\). If the shaping dynamics admit an equilibrium with
\(\xi_{i,e}=r_e\) and \(\sigma_{i,e}>0\), then \(r_e\) remains a first-order
stationary point of the shaped reduced problem. Moreover, its shaped reduced
Hessian is
\begin{equation}
\label{eq:nonconvex-shaped-hessian}
    H_{i,e}^s
    =
    H_{i,e}
    -
    \sigma_{i,e}T_{i,e}^\top Q_iT_{i,e}.
\end{equation}
Hence, if \(\lambda_{\min}(H_{i,e}^s)<0\), the originally attracting
nonconvex-induced equilibrium becomes saddle-type. In particular, for
\(Q_i=I\), any \(\sigma_{i,e}>\lambda_{\min}(H_{i,e})\) is sufficient.
\end{proposition}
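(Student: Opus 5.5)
The plan is a direct Taylor computation of the shaped objective $C_i^s(r,\vartheta_i)=C(r)+\Phi_i(r,\vartheta_i)$ at $r=r_e$, with the shaping state frozen at its equilibrium $\xi_{i,e}=r_e$, $\sigma_{i,e}>0$.

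\emph{Step 1: first-order stationarity.} Write $P:=P_{i,\mathcal T}(r_e)$ and $M:=P^\top Q_iP$. Since $P$ is the orthogonal projector onto $\mathcal T_{i,e}$, we have $P=P^\top$. Near $r_e$,
\[
\Phi_i=-\tfrac{\sigma_{i,e}}{2}\chi_i^b(r,r_e)\,q(r),\qquad q(r):=(r-r_e)^\top M(r-r_e).
\]
The function $q$ vanishes to second order at $r_e$, that is, $q(r_e)=0$ and $\nabla q(r_e)=0$. The product rule then gives $\nabla_r\Phi_i(r_e)=0$. Hence $\nabla C_i^s(r_e)=\nabla C(r_e)$. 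The constraints are unchanged, so the reduced KKT condition \eqref{eq:nonconvex-reduced-kkt} still holds with the same multiplier $\lambda_{i,e}$. This shows that $r_e$ remains first-order stationary for the shaped reduced problem.

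\emph{Step 2: the Hessian of the shaping term.} Differentiate twice:
\[
\nabla^2\Phi_i=-\tfrac{\sigma_{i,e}}{2}\big(\nabla^2\chi_i^b\,q+\nabla\chi_i^b\nabla q^\top+\nabla q\nabla\chi_i^{b\top}+\chi_i^b\nabla^2q\big).
\]
At $r_e$ we have $q=0$, $\nabla q=0$, $\chi_i^b=1$ and $\nabla^2 q=2M$. Only the last term survives, so $\nabla^2\Phi_i(r_e)=-\sigma_{i,e}M$. The Lagrangian Hessian $\nabla^2C-\lambda_{i,e}\nabla^2\eta_{i0}$ therefore picks up the term $-\sigma_{i,e}M$. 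Compress with $T_{i,e}$ and use $PT_{i,e}=T_{i,e}$, which holds because the columns of $T_{i,e}$ lie in $\mathcal T_{i,e}$. This gives $T_{i,e}^\top MT_{i,e}=T_{i,e}^\top Q_iT_{i,e}$, which yields \eqref{eq:nonconvex-shaped-hessian}.

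\emph{Step 3: saddle conclusion.} If $\lambda_{\min}(H^s_{i,e})<0$, then $r_e$ is a stationary boundary point of the shaped reduced flow whose reduced Hessian has a negative direction. By the classification quoted before the proposition \cite{REIS2026113032}, $r_e$ is saddle-type, i.e., no longer a local constrained minimum. For $Q_i=I$, orthonormality gives $T_{i,e}^\top T_{i,e}=I_{p-1}$, so $H^s_{i,e}=H_{i,e}-\sigma_{i,e}I$ and $\lambda_{\min}(H^s_{i,e})=\lambda_{\min}(H_{i,e})-\sigma_{i,e}<0$ whenever $\sigma_{i,e}>\lambda_{\min}(H_{i,e})$.

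\emph{Main obstacle.} The delicate point is the interpretation in Step 3. The shaping state is dynamic, so strictly speaking the equilibrium lives in the joint $(r,\vartheta_i)$ space. I would freeze $\vartheta_i$ at $(\sigma_{i,e},r_e,Q_i)$ and apply the reduced-flow classification only in $r$, treating the joint-stability question as outside the proposition's claim. This matches the statement's focus on the shaped reduced problem.

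I would also note the justification for keeping $\lambda_{i,e}$ unchanged. The constraint set is unchanged and the gradient at $r_e$ is unchanged, so the multiplier is unchanged. Only the curvature of the objective is modified; $\nabla^2\eta_{i0}$ does not change.
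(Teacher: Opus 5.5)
Your proposal is correct and follows essentially the same route as the paper: freeze the shaping state at $\xi_{i,e}=r_e$, show that $\nabla_r\Phi_i(r_e)=0$, obtain the Hessian contribution $-\sigma_{i,e}P^\top Q_iP$, and compress it using $P_{i,\mathcal T}(r_e)T_{i,e}=T_{i,e}$. Your version is slightly more explicit than the paper's. Because the quadratic form vanishes to second order, your gradient and Hessian computations do not use the bump condition $\nabla_r\chi_i^b(\xi_i,\xi_i)=0$ that the paper invokes, and you spell out the shift $\lambda_{\min}(H^s_{i,e})=\lambda_{\min}(H_{i,e})-\sigma_{i,e}$ for $Q_i=I$, which the paper leaves implicit.
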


\begin{proof}
Since \(\xi_{i,e}=r_e\), the shaping term is centered at \(r_e\). Together
with \(\nabla_r\chi_i^b(\xi_{i,e},\xi_{i,e})=0\), this gives
\(\nabla_r\Phi_i(r_e,\vartheta_{i,e})=0\). Therefore
\(\nabla C_i^s(r_e,\vartheta_{i,e})=\nabla C(r_e)\), and the first-order KKT
condition \eqref{eq:nonconvex-reduced-kkt} is preserved. On
\(\mathcal T_{i,e}\), one has \(P_{i,\mathcal T}(r_e)T_{i,e}=T_{i,e}\), so the
Hessian contribution of \(\Phi_i\) is
\(-\sigma_{i,e}T_{i,e}^\top Q_iT_{i,e}\). This proves
\eqref{eq:nonconvex-shaped-hessian}. If \(H_{i,e}^s\) has a negative
eigenvalue, the shaped reduced objective has negative curvature along a
feasible tangential direction, so the boundary point is no longer a local
constrained minimum.
\end{proof}

% This construction is local. It does not provide a global optimality theorem
% for general nonconvex feasible sets. If additional convex constraints are
% active at the same boundary point, \(T_{i,e}\) should be replaced by a basis of
% the intersection of all active tangent spaces. If that tangent space is
% zero-dimensional, no smooth tangential escape direction exists locally, and
% hybrid, time-varying, or path-level mechanisms are needed.

\section{Simulation}
\label{sec:simulation}

We consider five heterogeneous aerial vehicles with dynamics adopted from the benchmark example in \cite{hu2015consensus}:
\begin{equation}
\begin{aligned}
    \ddot p_{x,i}-2\omega_i\dot p_{y,i} &= \tau_{x,i},\\
    \ddot p_{y,i}+2\omega_i\dot p_{x,i}-3\omega_i^2p_{y,i} &= \tau_{y,i},\\
    y_i &= \operatorname{col}(p_{x,i},p_{y,i}),
\end{aligned}
\label{eq:sim_original_model}
\end{equation}
where $\omega_i=i-3$. Let $p_i=\operatorname{col}(p_{x,i},p_{y,i})$,
$v_i=\operatorname{col}(\dot p_{x,i},\dot p_{y,i})$, and
$x_i=\operatorname{col}(p_i,v_i)$. Then
\begin{equation}
    \dot p_i=v_i,\qquad
    \dot v_i=D_ip_i+R_iv_i+\tau_i,
\label{eq:sim_state_model}
\end{equation}
where
$
D_i=
\begin{bmatrix}
0&0\\
0&3\omega_i^2
\end{bmatrix},
R_i=
\begin{bmatrix}
0&2\omega_i\\
-2\omega_i&0
\end{bmatrix}.
$
The communication graph is the undirected ring
$\mathcal E=\{(1,2),(2,3),(3,4),(4,5),(5,1)\}$ with unit weights and Laplacian matrix $L$.

The local objective function is $c_i(r)=\frac12\|r-e_i\|^2.$ To test the case where the optimal solution lies on the safety boundary, we set $e_i=e=\operatorname{col}(2.5,2.5)$ for all $i$. The safety set is the regular hexagonal workspace
\begin{equation}
    \Omega=\{r\in\mathbb R^2:\eta_\ell(r)\ge0,\ \ell=1,\ldots,6\},
\end{equation}
where
$
    \eta_\ell(r)=c_\ell+a_\ell^\top r,
    c_\ell=2\cos\frac{\pi}{6}=\sqrt3,
$
and
$
a_\ell
=
-\operatorname{col}
\left(
\cos\left(\frac{\pi}{6}+\frac{(\ell-1)\pi}{3}\right),
\sin\left(\frac{\pi}{6}+\frac{(\ell-1)\pi}{3}\right)
\right).
$
The constrained optimal solution is the Euclidean projection of $e$ onto $\Omega$, namely
$r^\star=\operatorname{col}(1.04,1.66)$. The initial outputs are
$y_1(0)=\operatorname{col}(0,-0.3)$, $y_2(0)=\operatorname{col}(0,0.6)$,
$y_3(0)=\operatorname{col}(-0.3,0.7)$, $y_4(0)=\operatorname{col}(-0.5,-0.1)$,
and $y_5(0)=\operatorname{col}(-1,-0.7)$, with $v_i(0)=0$. In this simulation, only workspace safety is considered; inter-agent collision avoidance is not imposed.

\subsection{Baselines}

\subsubsection{SGF-HOCBF baseline}

We first compare with the SGF-HOCBF feedback optimizer in
\cite{delimpaltadakis2025feedback}. Since the original model lacks a unique
exponentially stable equilibrium for every constant input, we introduce a fixed
pre-stabilizing feedback and use $u_i$ as the SGF-HOCBF optimization input:
\begin{equation}
    \tau_i=-D_ip_i-R_iv_i-K_pp_i-K_dv_i+u_i.
\label{eq:sim_prestabilization_sgf}
\end{equation}
Then the plant seen by SGF-HOCBF is
\begin{equation}
    \dot p_i=v_i,\quad
    \dot v_i=-K_pp_i-K_dv_i+u_i,\quad
    \dot u_i=q_i.
\label{eq:sim_sgf_augmented}
\end{equation}
For constant $u_i$, the steady-state map is
$w_i(u_i)=\operatorname{col}(K_p^{-1}u_i,0)$. The nominal SGF direction is chosen as
\begin{equation}
    q_i^{\rm nom}
    =
    -\epsilon K_p^{-\top}
    \left[p_i-e+\kappa(Lp)_i\right],
\label{eq:sim_sgf_nominal}
\end{equation}
with $\epsilon=0.8$ and $\kappa=0.4$.

For each safety constraint, define $h_{i\ell,0}=\eta_\ell(p_i)$. Since the QP decision variable is $q_i=\dot u_i$, the output safety constraint has high relative degree with respect to $q_i$. We therefore construct the HOCBF chain
$
    h_{i\ell,1}=a_\ell^\top v_i+\beta h_{i\ell,0},
$
and
$
    h_{i\ell,2}
    =
    a_\ell^\top(-K_pp_i-K_dv_i+u_i)
    +2\beta a_\ell^\top v_i
    +\beta^2h_{i\ell,0}.
$
The final HOCBF condition is $\dot h_{i\ell,2}+\gamma h_{i\ell,2}\ge0$, and the SGF-HOCBF QP can be written as
\begin{equation}
\begin{aligned}
q^\star
=&
\arg\min_{q_1,\ldots,q_5}\quad
\frac12\sum_{i=1}^5
\|q_i-q_i^{\rm nom}\|^2 \\
\text{s.t.}\quad
&
\dot h_{i\ell,2}+\gamma h_{i\ell,2}\ge0,
\quad i=1,\ldots,5,\ \ell=1,\ldots,6.
\end{aligned}
\label{eq:sim_sgf_hocbf_qp}
\end{equation}
We use $\beta=3$ and $\gamma=8$. Two pre-stabilizing gains are tested:
$
    K_p=4I, K_d=4I,
$
and
$
    K_p=\operatorname{diag}(1,8),
    K_d=\operatorname{diag}(3,5).
$
The anisotropic choice illustrates possible mismatch between the HOCBF-induced
boundary equilibrium and the original KKT point.

\subsubsection{Projected primal-dual feedback optimization baseline}

We also compare with the projected primal-dual feedback optimization method in
\cite{qin2023distributed}. It handles steady-state inequalities but is not
designed for transient output safety. The hexagonal constraints are written as
$
    Br_i-\bar g\le0,
$
where
$
B=
\operatorname{col}(-a_1^\top, \cdots,-a_6^\top),
\bar g=\operatorname{col}(c_1,\ldots,c_6).
$
We use the same tracking layer as in our method:
$
    \tau_i=-D_ip_i-R_iv_i-K_p(p_i-r_i)-K_dv_i,
$
which gives $\dot p_i=v_i$ and $\dot v_i=-K_p(p_i-r_i)-K_dv_i$. Let
$\chi_i=\operatorname{col}(p_i-r_i,v_i)$. Choose $P=P^\top>0$ satisfying
\[
A_T^\top P+PA_T=-I,
\quad
A_T=
\begin{bmatrix}
0&I_2\\
-K_p&-K_d
\end{bmatrix},
\]
and define $\phi_i=2B_T^\top P\chi_i$ with $B_T=\operatorname{col}(-I_2,0)$. The projected primal-dual feedback optimizer is implemented as
\begin{equation}
\begin{aligned}
\dot r_i
&=
-(p_i-e)-B^\top z_i-\sum_{j\in\mathcal N_i}a_{ij}(\lambda_i-\lambda_j),\\
\dot z_i
&=
[Br_i-\bar g+\sigma B\phi_i]^+_{z_i},\\
\dot \lambda_i
&=
\sum_{j\in\mathcal N_i}a_{ij}(r_i-r_j)
+
\sigma\sum_{j\in\mathcal N_i}a_{ij}(\phi_i-\phi_j),
\end{aligned}
\label{eq:sim_pdfo}
\end{equation}
where $z_i(0)=0$, $\lambda_i(0)=0$, $r_i(0)=p_i(0)$, and $\sigma=0.1$. The projection operator $[\cdot]^+_{z_i}$ keeps $z_i(t)\ge0$, but it does not make $Br_i-\bar g\le0$ or $By_i-\bar g\le0$ forward invariant.

\subsection{Implementation of the Proposed Method}

For the proposed method, we introduce a reference signal $g_i\in\mathbb R^2$ and design $\dot g_i=\rho_i$. The physical tracking controller is chosen from the regulator-equation construction. For the aerial-vehicle model, take
$
\Pi_i=
\begin{bmatrix}
I_2\\
0_{2\times2}
\end{bmatrix},
\qquad
\Psi_i=-D_i.
$
Let
$
K_{1i}=
\begin{bmatrix}
-D_i-K_p & -R_i-K_d
\end{bmatrix},
K_{2i}=\Psi_i-K_{1i}\Pi_i=K_p.
$
Then $\tau_i=K_{1i}x_i+K_{2i}g_i$ is equivalently
\begin{equation}
    \tau_i=-D_ip_i-R_iv_i-K_p(p_i-g_i)-K_dv_i.
\label{eq:sim_tracking_controller}
\end{equation}
The resulting tracking dynamics are
\begin{equation}
    \dot p_i=v_i,\qquad
    \dot v_i=-K_p(p_i-g_i)-K_dv_i.
\label{eq:sim_tracking_dynamics}
\end{equation}
For fixed $g_i$, the equilibrium is $p_i^\star=g_i$, $v_i^\star=0$. Let
$e_i^p=p_i-g_i$ and use the global Lyapunov function
\begin{equation}
    V_i(p_i,v_i,g_i)
    =
    \frac12(e_i^p)^\top K_p e_i^p
    +
    \frac12v_i^\top v_i.
\label{eq:sim_tracking_lyapunov}
\end{equation}
When $g_i$ is fixed, $\dot V_i=-v_i^\top K_dv_i\le0$. Since this Lyapunov function is global, no stability-domain DSM is used.

For each workspace constraint, the exact safety-energy threshold is
\begin{equation}
    \Gamma^S_{i\ell}(g_i)
    =
    \frac12
    \frac{\eta_\ell(g_i)^2}{a_\ell^\top K_p^{-1}a_\ell}.
\label{eq:sim_gamma_safety}
\end{equation}
and the safety DSM is
\begin{equation}
    M^S_{i\ell}(p_i,v_i,g_i)
    =
    \Gamma^S_{i\ell}(g_i)
    -
    V_i(p_i,v_i,g_i).
\label{eq:sim_safety_dsm}
\end{equation}
The reference governor of agent $i$ starts from the local measured
output-gradient direction
$
    \rho_i^{\rm nom}
    =
    -k_g\nabla c_i(y_i)
    =
    -k_g(p_i-e_i).
$
At each time instant, agent $i$ computes its reference velocity by the
following neighbor-coupled distributed QP:
\begin{equation}
\begin{aligned}
\rho_i^\star
=&
\arg\min_{\rho_i}\quad
\frac12
\|\rho_i-\rho_i^{\rm nom}\|^2
\\
\mathrm{s.t.}\quad
&
a_{ij}(\rho_i-\rho_j)
+
\alpha_c a_{ij}(g_i-g_j)
=0,
\qquad j\in\mathcal N_i,
\\
&
a_\ell^\top\rho_i
+
\alpha_\eta\eta_\ell(g_i)
\ge0,
\qquad \ell=1,\ldots,6,
\\
&
\left(
\nabla_{g_i}\Gamma^S_{i\ell}(g_i)
+
K_p(p_i-g_i)
\right)^\top\rho_i
+
v_i^\top K_dv_i
+
\alpha_S M^S_{i\ell}
\ge0,
\\
&\hfill \ell=1,\ldots,6 .
\end{aligned}
\label{eq:sim_dsm_local_qp}
\end{equation}
We use $k_g=1$, $\alpha_c=0.5$, $\alpha_\eta=0.2$, and $\alpha_S=5$, with $g_i(0)=p_i(0)$. The gain choices of isotropic and anisotropic tracking are both tested.

\subsection{Results}

Fig.~\ref{fig:sgf_hocbf_traj} shows the SGF-HOCBF trajectories under the two pre-stabilizing gains. In the isotropic case, the agents approach the true constrained optimal solution $r^\star$, with final average distance about $3.12\times10^{-2}$ and minimum safety value about $-8.88\times10^{-16}$. However, in the anisotropic case, the SGF-HOCBF baseline converges to a wrong boundary point near $(1.96,0.06)$ instead of $r^\star=(1.0425,1.6585)$; the final average distance to $r^\star$ is about $1.8431$, while the safety value remains nonnegative up to numerical precision. Thus, SGF-HOCBF preserves safety but may fail to preserve optimality when the optimal solution lies on the safety boundary.

\begin{figure}[t]
    \centering
    \includegraphics[width=0.95\linewidth]{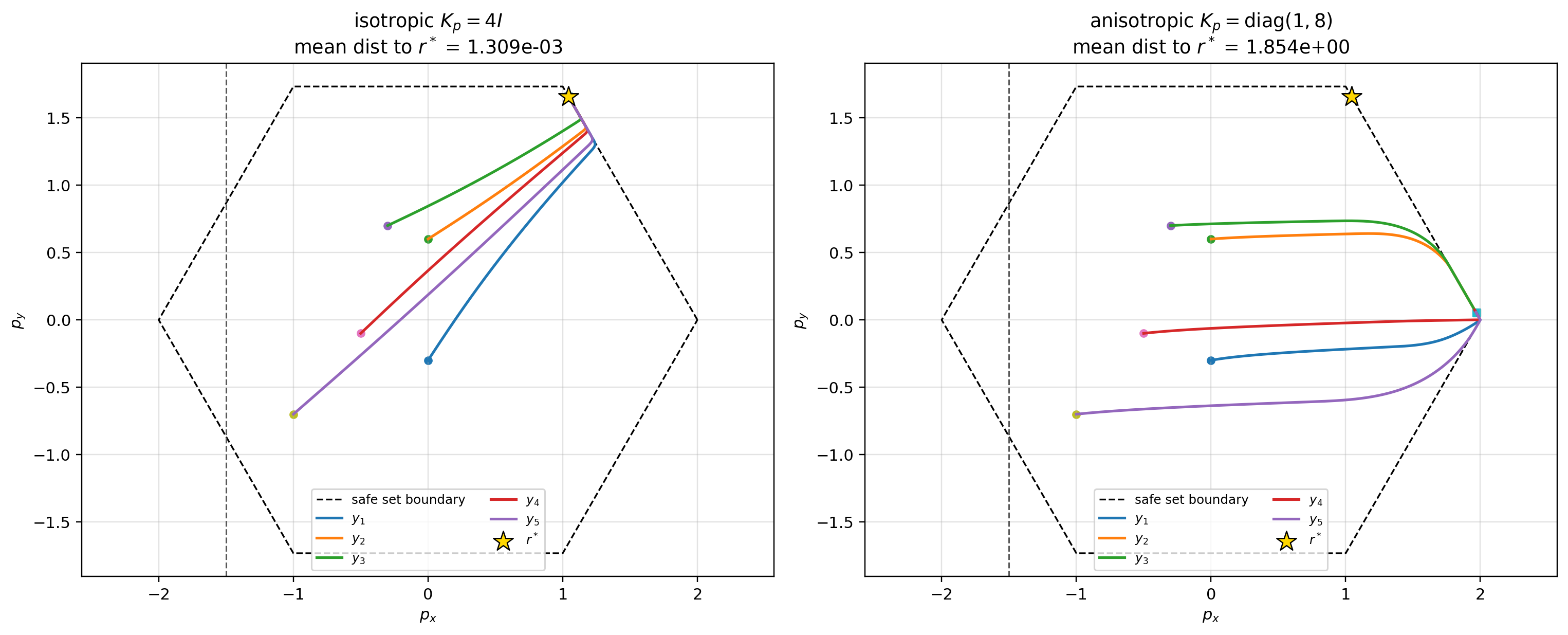}
    \caption{Trajectories of the SGF-HOCBF baseline under isotropic and anisotropic pre-stabilization.}
    \label{fig:sgf_hocbf_traj}
\end{figure}

Fig.~\ref{fig:pdfo_traj} shows the projected primal-dual feedback optimization baseline. This method can drive the outputs toward the constrained optimal solution, but the trajectories leave the safe workspace during transients. This is expected because the projection in \eqref{eq:sim_pdfo} is only used to maintain the nonnegativity of the dual variable $z_i$, and does not enforce forward invariance of the output safety set.

\begin{figure}[t]
    \centering
    \includegraphics[width=0.7\linewidth]{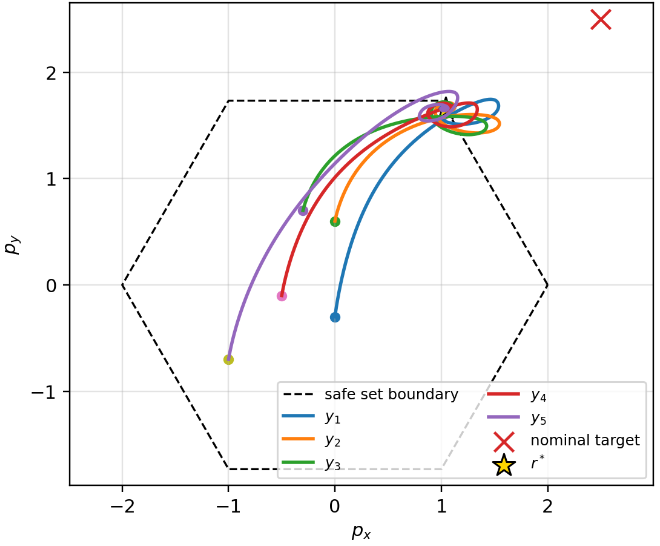}
    \caption{Trajectories of the projected primal-dual feedback optimization baseline. The method handles steady-state inequality constraints, but transient output safety is not guaranteed.}
    \label{fig:pdfo_traj}
\end{figure}

Figs.~\ref{fig:proposed_iso_traj} and \ref{fig:proposed_aniso_traj} show the trajectories of the proposed method under isotropic and anisotropic tracking gains, respectively. Unlike SGF-HOCBF, the proposed method converges to the true constrained optimal solution in both cases. The final average output distance to $r^\star$ is about $7.61\times10^{-11}$ in the isotropic case and $1.20\times10^{-10}$ in the anisotropic case.

\begin{figure}[t]
    \centering
    \subfloat[Isotropic tracking gains.]{%
        \includegraphics[width=0.48\linewidth]{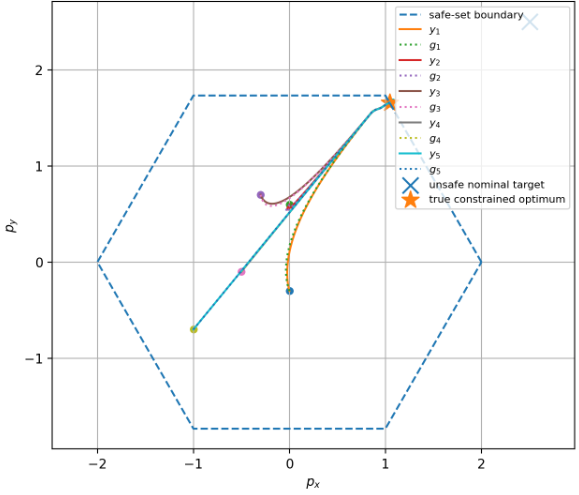}
        \label{fig:proposed_iso_traj}%
    }
    \hfill
    \subfloat[Anisotropic tracking gains.]{%
        \includegraphics[width=0.48\linewidth]{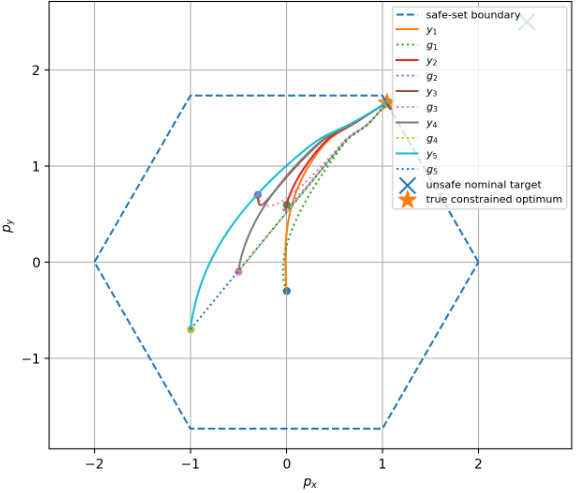}
        \label{fig:proposed_aniso_traj}%
    }
    \caption{Trajectories of the proposed DSM reference governor under isotropic and anisotropic tracking gains.}
    \label{fig:proposed_tracking_traj}
\end{figure}

The convergence processes are compared in Fig.~\ref{fig:convergence_compare}. The proposed method converges to the true constrained optimal solution in both tracking cases, whereas the SGF-HOCBF method with anisotropic pre-stabilization converges to a non-optimal boundary equilibrium.

\begin{figure}[t]
    \centering
    \subfloat[Proposed method.]{%
        \includegraphics[width=0.48\linewidth]{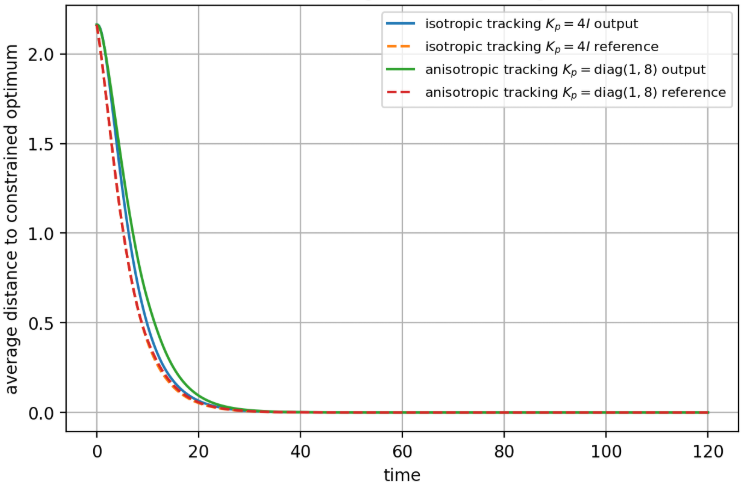}
        \label{fig:erg_convergence}%
    }
    \hfill
    \subfloat[SGF-HOCBF baseline.]{%
        \includegraphics[width=0.48\linewidth]{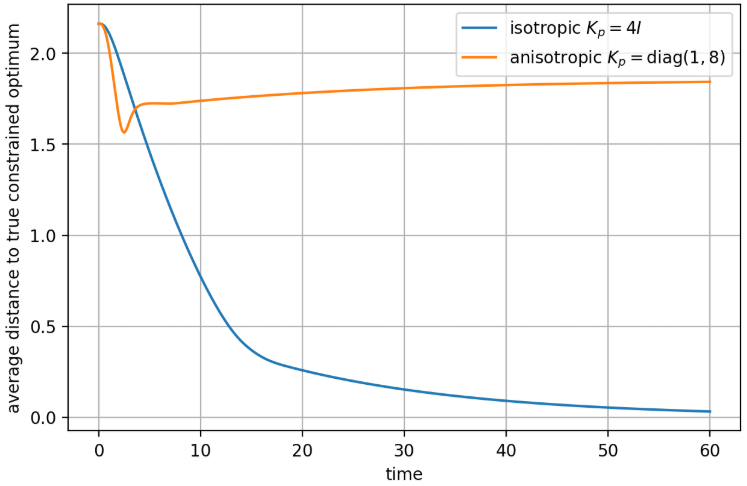}
        \label{fig:sgf_converge}%
    }
    \caption{Average distance to the constrained optimal solution $r^\star$.}
    \label{fig:convergence_compare}
\end{figure}

The safety diagnostics of the proposed method are shown in Figs.~\ref{fig:reference_safety} and \ref{fig:transient_safety}. The reference safety CBF keeps $\eta_\ell(g_i(t))\ge0$, while the DSM constraint keeps $M^S_{i\ell}(t)\ge0$, thereby certifying transient output safety. In the simulation, the minimum output safety value is about $7.61\times10^{-11}$ in the isotropic case and $8.68\times10^{-11}$ in the anisotropic case, and the minimum DSM value is zero up to numerical precision.

\begin{figure}[t]
    \centering
    \subfloat[Reference safety.]{%
        \includegraphics[width=0.48\linewidth]{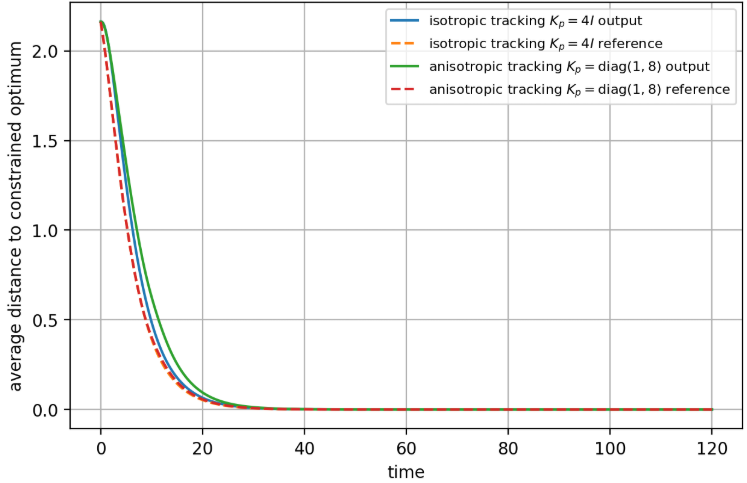}
        \label{fig:reference_safety}%
    }
    \hfill
    \subfloat[Transient safety DSM.]{%
        \includegraphics[width=0.48\linewidth]{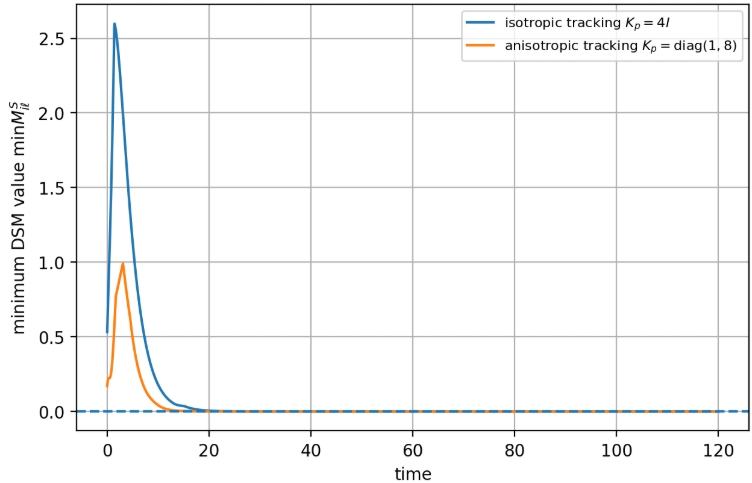}
        \label{fig:transient_safety}%
    }
    \caption{Safety diagnostics of the proposed DSM reference governor.}
    \label{fig:safety_diagnostics}
\end{figure}

We further test the proposed DSM reference governor under an additional
nonconvex obstacle constraint. A circular obstacle is placed at
$o=\operatorname{col}(0.5,1.0)$ with radius $r_o=0.25$, and the corresponding
safe set is described by
$\eta_{\rm obs}(r)=\|r-o\|^2-r_o^2\ge0$. The obstacle does not exclude the
desired hexagonal optimal solution $r^\star$, but it lies between the initial
references and $r^\star$, and therefore creates a possible spurious boundary
equilibrium for the CBF-filtered reference flow. In the baseline case, the
DSM reference governor is augmented with the obstacle reference CBF
$\nabla\eta_{\rm obs}(g_i)^\top\rho_i+\alpha_{\rm obs}\eta_{\rm obs}(g_i)\ge0$
and the corresponding obstacle DSM-CBF, with $\alpha_{\rm obs}=0.8$ and
$\alpha_{\rm Mobs}=5$. The isotropic tracking gains $K_p=4I$ and $K_d=4I$ are
used.

To test the proposed escape mechanism, we also implement adaptive tangential
objective shaping. The QP constraints are unchanged, and only the nominal
gradient direction is modified by replacing $-\nabla c_i(y_i)$ with
$-\nabla c_i(y_i)-\nabla_{g_i}\Phi_i(g_i,\vartheta_i)$. The shaping term uses
the tangent projection of the circular obstacle boundary, with $Q_i=I$, a
Gaussian bump radius $0.80$, $\bar\sigma_i=50$, $k_{\sigma i}=6$,
$k_{\xi i}=0.10$, $h_{i,\rm off}=0.30$, and alignment threshold $0.15$. The
shaping state evolves continuously according to
$\dot\sigma_i=-k_{\sigma i}(\sigma_i-\bar\sigma_i a_i)$ and
$\dot\xi_i=k_{\xi i}a_i(g_i-\xi_i)$, where $a_i$ is a smooth activation
function that becomes large near the obstacle boundary when
$\nabla c_i(y_i)$ is nearly aligned with $\nabla\eta_{\rm obs}(g_i)$.

\begin{figure}[t]
    \centering
    \includegraphics[width=0.9\linewidth]{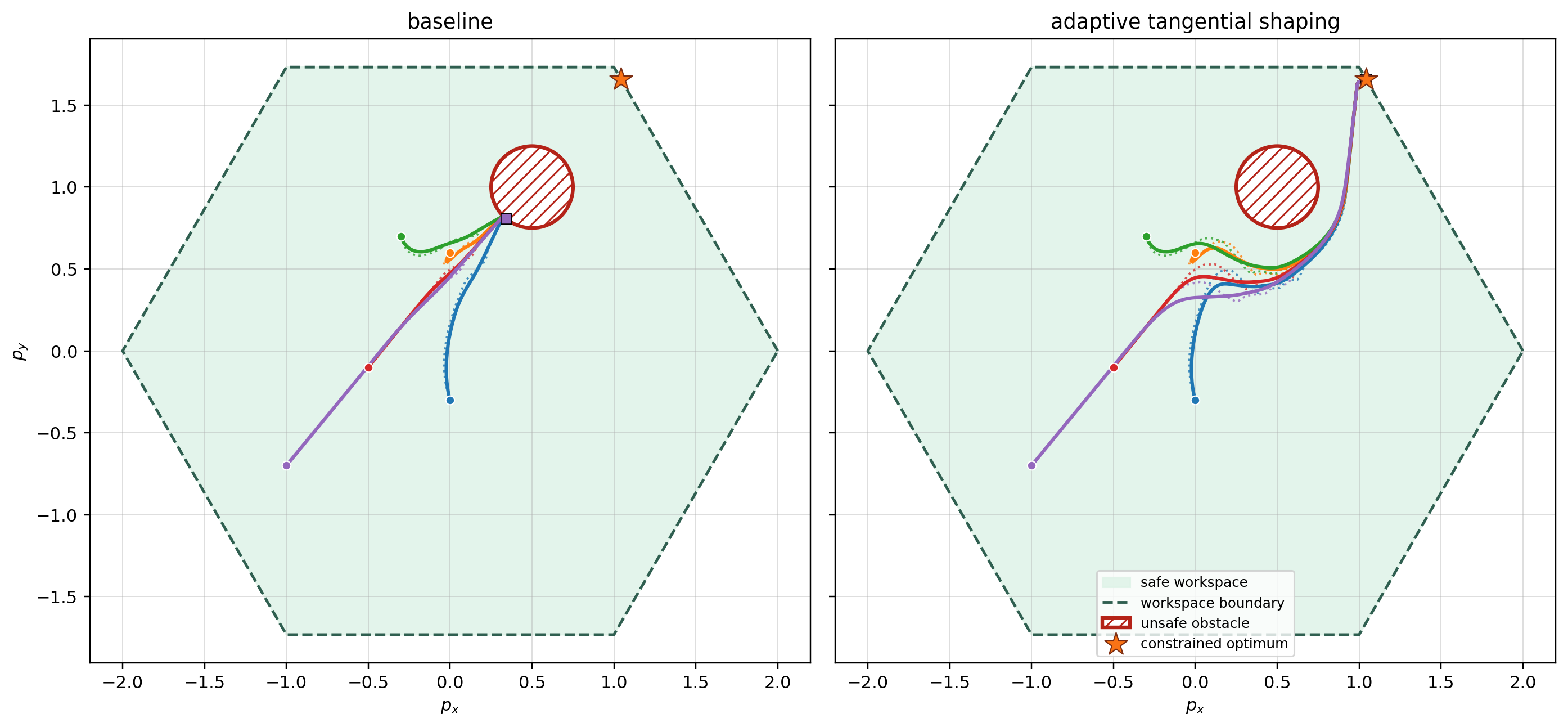}
    \caption{Trajectories with a nonconvex circular obstacle: DSM reference governor without shaping and with adaptive tangential objective shaping.}
    \label{fig:obstacle_tangential_shaping}
\end{figure}

Fig.~\ref{fig:obstacle_tangential_shaping} compares the trajectories. Without
shaping, the agents reach agreement but converge to a non-optimal point on the
obstacle boundary. The final mean output is approximately
$\operatorname{col}(0.3407,0.8052)$, the final average distance to
$r^\star$ is about $1.1048$, and the final consensus residual is about
$7.38\times10^{-14}$. Hence the failure is not caused by lack of agreement,
but by a stable nonconvex-induced boundary equilibrium. With adaptive
tangential shaping, the trajectories acquire a tangential component near the
obstacle, move around the boundary, and converge to the desired optimal solution. The
final mean output is approximately $\operatorname{col}(1.04245,1.65848)$, the
final average distance to $r^\star$ is about $2.15\times10^{-5}$, and the
final consensus residual is about $1.26\times10^{-13}$. The shaped run also
keeps a positive obstacle safety margin, with minimum output and reference
obstacle margins about $0.1028$ and $0.1098$, respectively. This experiment
illustrates that tangential shaping can turn an attracting nonconvex boundary
equilibrium into an escapable saddle-type region, while leaving the CBF and
DSM safety constraints unchanged.

Overall, SGF-HOCBF enforces safety but may converge to a spurious boundary
equilibrium, while projected primal-dual feedback optimization handles
steady-state inequalities but not transient safety. The proposed DSM governor
uses reference CBFs and DSM-CBFs to preserve transient safety and convergence
for convex workspace constraints. The nonconvex-obstacle case further shows
that stable boundary equilibria can occur, and that adaptive tangential
objective shaping provides a local escape mechanism without relaxing safety.

\section{Conclusion and Future Work}

This paper proposed a reference-governed distributed safe gradient-flow
framework for safe optimal output agreement. By separating output regulation
from distributed optimization, first-order reference CBFs enforce admissibility
and DSM-CBFs certify transient output safety. Under DSM-compatibility and
regularity conditions, the network-sparse QP preserves static optimality and
the coupled dynamics converge by a small-gain argument. Simulations verified
safe convergence, advantages over HOCBF-based feedback optimization, and the
ability of adaptive tangential objective shaping to escape nonconvex-induced
spurious equilibria. Future work will study optimal solutions located on nonconvex
constraint boundaries and extend the analysis to nonconvex distributed feedback optimization beyond output-agreement tasks.

\bibliographystyle{unsrtnat}
\bibliography{reference}

\end{document}